\newcommand{\Tmabc}{\Gamma}
\newcommand{\Aiabc}{\Sigma}
\newcommand{\Tmtran}{\gamma}
\newcommand{\Aabc}{\Sigma_+}
\newcommand{\pp}{p}
\newcommand{\ppn}{N}
\newcommand{\tst}{\textit{tr}}
\newcommand{\perm}[3]{\fu^{{#1}_{#2}^{#3}}}
\newcommand{\yes}{$\mathsf{yes}$}
\newcommand{\no}{$\mathsf{no}$}
\newcommand{\RPR}{\mathsf{RPR}}
\newcommand{\A}{\ensuremath{\mathfrak A}}
\newcommand{\fu}{\delta}
\newcommand{\lang}{\mathcal{L}}
\renewcommand{\L}{{\boldsymbol{L}}}
\newcommand{\Qr}{Q^r}
\newcommand{\simclass}[1]{#1/_{\mathop{\simm}}}
\newcommand{\LTL}{\textsl{LTL}}
\newcommand{\NCo}{{{\ensuremath{\textsc{NC}^1}}}}
\newcommand{\ACz}{{\ensuremath{\textsc{AC}^0}}}
\newcommand{\ACC}{{\ensuremath{\textsc{ACC}^0}}}
\newcommand{\MOD}{\ensuremath{\mathsf{MOD}}}
\newcommand{\coNP}{{\ensuremath{\textsc{coNP}}}}
\newcommand{\ExpSpace}{{\ensuremath{\textsc{ExpSpace}}}}
\newcommand{\PSpace}{{\ensuremath{\textsc{PSpace}}}}
\newcommand{\FO}{\mathsf{FO}}
\newcommand{\qa}{q_{\textit{acc}}}
\newcommand{\B}{\mathsf{b}}
\newcommand{\M}{\boldsymbol{M}}
\newcommand{\conf}{\mathfrak c}
\newcommand{\aut}{\mathfrak B}
\newcommand{\G}{\mathfrak G}
\newcommand{\Sg}{\mathfrak S}
\newcommand{\unit}{identity}
\newcommand{\fum}{\tilde{\fu}}
\newcommand{\ord}{o}
\newcommand{\id}{\mathsf{id}}
\newcommand{\simm}{\sim}
\newcommand{\Amin}{\A_{\L}}
\tikzset{
  basic box/.style = {
    shape = rectangle,
    align = center,
    draw  = #1,
    rounded corners},
  header node/.style = {
    font          = \strut\Large\ttfamily,
    text depth    = +0pt,
    fill          = white,
    draw},
  header/.style = {%
    inner ysep = +1.5em,
    append after command = {
      \pgfextra{\let\TikZlastnode\tikzlastnode}
      node [header node] (header-\TikZlastnode) at (\TikZlastnode.north) {#1}
    }
  },
  hv/.style = {to path = {-|(\tikztotarget)\tikztonodes}},
  vh/.style = {to path = {|-(\tikztotarget)\tikztonodes}},
  fat blue line/.style = {ultra thick, blue}
}
\title{Deciding FO-definability of Regular Languages}
\author{Agi Kurucz\inst{1} \and Vladislav Ryzhikov\inst{2} \and Yury Savateev\inst{2,3} \and Michael Zakharyaschev\inst{2,3}}
\institute{King's College London, UK \and Birkbeck, University of London, UK \and HSE University, Moscow, Russia}
\begin{document}

\maketitle


\begin{abstract}
We prove that, similarly to known \PSpace-completeness of recognising $\FO(<)$-definability of the language $\L(\A)$ of a DFA $\A$,
deciding both \mbox{$\FO(<,\equiv)$-} and $\FO(<,\mathsf{MOD})$-de\-fin\-abil\-ity 
(corresponding to circuit complexity in $\ACz$ and $\ACC$) are \PSpace-complete.
We obtain these results by first showing that known algebraic characterisations of FO-definability
of $\L(\A)$ can be captured by `localisable' properties of the transition monoid of $\A$.
Using our criterion, we then generalise the known proof of \PSpace-hardness of $\FO(<)$-definability, and establish the upper bounds not only for arbitrary DFAs but also for 2NFAs. 
\end{abstract}

\section{Introduction}

This paper gives answers to some open 
questions related to finite automata, logic and circuit complexity.
Research in this area goes back (at least) to the early 1960s when B\"{u}chi~\cite{Buchi60}, Elgot~\cite{Elgot61} and Trakhtenbrot~\cite{Trakh62} showed that \mbox{$\text{MSO}(<)$} (monadic second-order) sentences over finite strict linear orders define exactly the class of regular languages. 

$\FO(<)$-definable regular languages were proven to be the same as star-free languages~\cite{McNaughton&Papert71}, and their algebraic characterisation as languages with aperiodic syntactic monoids was obtained in~\cite{DBLP:journals/iandc/Schutzenberger65a}. Algebraic characterisations of FO-definability in other signatures, and
circuit and descriptive complexity of regular languages were investigated in~\cite{DBLP:journals/jcss/Barrington89,DBLP:journals/jcss/BarringtonCST92,Straubing94}, which established an $\ACz$/$\ACC$/$\NCo$ trichotomy.
In particular,
the regular languages decidable in \ACz{} are definable by $\FO(<,\equiv)$-sentences with unary predicates $x \equiv 0\, (\text{mod}\ n)$; those in $\ACC$ are definable by $\FO(<,\MOD)$-sentences with quantifiers $\exists^n x\, \psi(x)$ checking whether the number of positions satisfying $\psi$ is divisible by $n$; and all regular languages are definable in $\FO(\RPR)$ with relational primitive recursion~\cite{DBLP:journals/iandc/ComptonL90}; see Table~\ref{tab:algebra}. 

The problem of deciding whether the language of a given DFA $\A$ 
is $\FO(<)$-definable is known to be \PSpace-complete~\cite{DBLP:journals/iandc/Stern85,DBLP:journals/TCS/ChoHyunh91,DBLP:journals/actaC/Bernatsky97} (which is also a special case of general results on finite 
monoids~\cite{Beaudryetal92,fleischeretal18}).
As shown in \cite{DBLP:journals/jcss/BarringtonCST92}, the algebraic criteria of Table~\ref{tab:algebra} yield 
algorithms deciding whether a given regular language is in $\ACz$ and $\FO(<,\equiv)$-definable, or in $\ACC$ and $\FO(<,\MOD)$-definable, or $\NCo$-complete and is not $\FO(<,\MOD)$-definable (unless $\ACC = \NCo$). 
However, these `brute force' algorithms are not optimal, requiring the generation of the whole transition monoid 
of $\A$, which can be of exponential size~\cite{DBLP:journals/eatcs/HolzerK04}. 
As far as we know, the precise complexity of these decision problems has remained open. 

\begin{table}[ht]
\centering
\begin{tabular}{c|c|c}\toprule
definability of $\L$ & algebraic characterisation of $\L$ & circuit complexity\\
\hline
$\FO(<)$& $M(\L)$ is aperiodic & \multirow{2}{*}{in \ACz}\\
\hhline{|-|-|~|}
$\FO(<,\equiv)$& $\eta_\L$ is quasi-aperiodic &  \\
\hline
$\FO(<,\MOD)$ &all groups in $M(\L)$ are solvable &in \ACC\\
\hline
$\FO(\RPR)$ &arbitrary $M(\L)$& in \NCo\\
\hline
\hline
not in $\FO(<,\MOD)$  & $M(\L)$ contains an unsolvable group & \NCo-hard\\
\bottomrule
\end{tabular}
\caption{Definability, algebraic characterisations and circuit complexity of a regular language $\L$, where $M(\L)$ is the syntactic monoid and $\eta_\L$ the syntactic morphism of $\L$.}
\label{tab:algebra}
\end{table}

Our interest in the exact complexity of these problems is motivated by recent advances in ontology-based data access (OBDA) with linear time temporal logic \LTL~\cite{DBLP:conf/time/ArtaleKKRWZ17,DBLP:journals/corr/abs-2004-07221}.
The classical (atemporal) OBDA paradigm~\cite{PLCD*08,DBLP:conf/ijcai/XiaoCKLPRZ18} relies on a reduction of answering a query mediated by an ontology under the open-world semantics to evaluating a database query in a standard language such as SQL or its extension---that is, essentially, an extension of first-order logic---under the closed-world semantic. In the context of temporal OBDA, answering \LTL{} ontology-mediated queries is equivalent to deciding certain regular languages given by an NFA or 2NFA of (possibly) exponential size, which gives rise to the circuit complexity and FO-definability problems for those languages.
For further details the reader is referred to \cite{DBLP:conf/time/21}, which relies on the results we obtain below. 

\smallskip
\noindent
\emph{Our contribution} in this paper is as follows.
Let $\lang$ be one of the languages 
$\FO(<,\equiv)$ or $\FO(<,\MOD)$. 
First, using the algebraic characterisation results of ~\cite{DBLP:journals/jcss/Barrington89,DBLP:journals/jcss/BarringtonCST92,Straubing94}, we
obtain criteria for the $\lang$-definability of the language $\L(\A)$ of any given DFA $\A$ in terms of a limited part of the transition monoid of $\A$ (Theorem~\ref{DFAcrit}). Then,
by using our criteria and generalising the construction of~\cite{DBLP:journals/TCS/ChoHyunh91},  we show that deciding $\lang$-definability of $\L(\A)$ for any minimal DFA $\A$ is \PSpace-hard (Theorem~\ref{DFAhard}).
Finally, we apply our criteria to give a \PSpace-algorithm deciding $\lang$-definability of $\L(\A)$ for not only any DFA but any 2NFA $\A$  (Theorem~\ref{thm:2NFA}).


\section{Preliminaries}\label{prelims:aut}

We begin by briefly reminding the reader of the basic algebraic and automata-theoretic notions required in the remainder of the paper.

\subsection{Monoids and Groups}

A \emph{semigroup} is a structure $\Sg=(S,\cdot)$ where $\cdot$ is an associative binary operation.
Given $s,s'\in S$ and $n>0$, we write $s^n$ for $s\cdot$ $\dots$ $\cdot s$ $n$-times, and often write $ss'$ for $s\cdot s'$.
An element $s$ in a semigroup $\Sg$ is \emph{idempotent} if $s^2=s$.
An element $e$ in $\Sg$ is an \emph{\unit{}} if $e\cdot x=x\cdot e=x$ for all $x\in S$.
(It is easy to see that such an $e$ is unique, if exists.) The \unit{} element is clearly idempotent.
A \emph{monoid} is a semigroup with an \unit{} element.
For any element $s$ in a monoid, we set $s^0=e$.
A monoid $\Sg=(S,\cdot)$ is a \emph{group} if, for any $x\in S$, there is $x^-\in S$---the \emph{inverse of} $x$---such that
$x\cdot x^-=x^-\cdot x=e$ (every element of a group has a unique inverse). A group is \emph{trivial} if it has one element, and \emph{nontrivial} otherwise.

Given two groups $\G=(G,\cdot)$ and $\G'=(G',\cdot')$,
a map $h\colon G\to G'$ is a \emph{group homomorphism from $\G$ to $\G'$} if  $h(g_1\cdot g_2)=h(g_1)\cdot' h(g_2)$ for all
$g_1,g_2\in G$. (It is easy to see that any group homomorphism maps the \unit{} of $\G$
to the \unit{} of $\G'$ and preserves the inverses. The set
$\{h(g)\mid g\in G\}$
is closed under $\cdot'$, and so is a group, the \emph{image of $\G$ under $h$\/}.)
$\G$ is a \emph{subgroup of} $\G'$ if $G\subseteq G'$ and
the identity map $\id_G$ is a group homomorphism.
Given $X\subseteq G$, the \emph{subgroup of $\G$ generated by $X$} is the smallest subgroup of $\G$ containing $X$.
%
%
The \emph{order} $\ord_\G(g)$ of an element $g$ in $\G$ is
the smallest positive number $n$ with $g^n=e$, which always exists. Clearly, $\ord_\G(g)=\ord_\G(g^-)$ and, if $g^k=e$
then $\ord_\G(g)$ divides $k$. Also,
\begin{equation}\label{idemp}
\mbox{if $g$ is a nonidentity element in a group $\G$, then $g^k\ne g^{k+1}$ for any $k$.}
\end{equation}
A semigroup $\Sg'=(S',\cdot')$ is a \emph{subsemigroup} of a semigroup $\Sg=(S,\cdot)$
if $S'\subseteq S$ and $\cdot'$ is the restriction of $\cdot$ to $S'$.
Given a monoid $\M=(M,\cdot)$ and a set $S\subseteq M$, we say that $S$ \emph{contains the group} $\G=(G,\cdot')$, if
$G\subseteq S$ and $\G$ is a subsemigroup of $\M$.
Note that we do {\bf not} require the \unit{} of $\M$ to be in $\G$, even if it is in $S$.
If $S=M$, we also say that $\M$ \emph{contains the group} $\G$, or $\G$ \emph{is in} $\M$. We call a monoid $\M$ \emph{aperiodic} if it does not contain any nontrivial groups.

Let $\Sg=(S,\cdot)$ be a finite semigroup and $s\in S$. By the pigeonhole principle,
there exist $i,j\geq 1$ such that $i+j\leq |S|+1$ and $s^{i}=s^{i+j}$. Take the minimal such numbers, that is,
let $i_s,j_s\geq 1$ be such that $i_s+j_s\leq |S|+1$  and $s^{i_s}=s^{i_s+j_s}$ but $s^{i_s},s^{i_s+1},\dots,s^{i_s+j_s-1}$ are all different.
Then clearly $\G_s=(G_s,\cdot)$, where $G_s=\{s^{i_s},s^{i_s+1},\dots,s^{i_s+j_s-1}\}$, is a subsemigroup of $\Sg$.
It is easy to see that there is $m\geq 1$ with $i_s\leq m\cdot j_s<i_s+j_s\leq |S|+1$, and so $s^{m\cdot j_s}$ is idempotent. Thus, for every element $s$ in a semigroup $\Sg$, we have the following:
\begin{align}
\label{gini}
& \mbox{there is $n\geq 1$ such that $s^n$ is idempotent;}\\
\label{ginii}
& \mbox{$\G_s$ is a group in $\Sg$ (isomorphic to the cyclic group $\mathbb Z_{j_s}$);}\\
\label{giniii}
& \mbox{$\G_s$ is nontrivial iff $s^n\ne s^{n+1}$ for any $n$.}
\end{align}
%
%
%
 %
%
%
%
%
%
%
Let $\fu \colon Q\to Q$ be a function on a finite set $Q \ne \emptyset$.
For any $p\in Q$, the subset $\{\fu^k(p)\mid k<\omega\}$ with the obvious multiplication is a semigroup,
and so we have:
%
\begin{align}
\label{fpi}
& \mbox{for every $p\in Q$, there is $n_p\geq 1$ such that $\fu^{n_p}\bigl(\fu^{n_p}(p)\bigr)=\fu^{n_p}(p)$;}\\
\label{fpii}
& \mbox{there exist $q\in Q$ and $n\geq 1$ such that $q=\fu^n(q)$;}\\
\nonumber
& \mbox{for every $q\in Q$, if $q=\fu^k(q)$ for some $k\geq 1$,}\\
\label{fpiii}
& \hspace*{3.5cm}\mbox{then there is $n$, $1\leq n\leq |Q|$, with $q=\fu^n(q)$.}
\end{align}

For a definition of \emph{solvable} and \emph{unsolvable} groups the reader is referred to~\cite{rotman1999introduction}. Here, we only need the fact that any homomorphic image of a solvable group is solvable and the Kaplan--Levy criterion \cite{kaplan_levy_2010} (generalising Thompson's \cite[Cor.3]{thompson1968}) according to which a finite group $\G$ is unsolvable iff it contains three elements $a,b,c,$ such that $\ord_\G(a)=2$, $\ord_\G(b)$ is an odd prime,
$\ord_\G(c)>1$ and coprime to both $2$ and $\ord_\G(b)$, and $abc$ is the \unit{} element of $\G$.

A one-to-one and onto function on a finite set $S$ is called a \emph{permutation on} $S$.
The \emph{order of a permutation} $\fu$ is its order in the group of all permutations on $S$
(whose operation is composition, and its identity element is the identity permutation $\id_S$).
We use the standard cycle notation for permutations.

Suppose $\G$ is a monoid of $Q\to Q$ functions, for some finite set $Q \ne \emptyset$.
Let $S=\{q\in Q\mid e_\G(q)=q\}$, where $e_\G$ the \unit{} element in $\G$. For every function $\fu$ in $\G$, let $\fu\!\!\restriction_S$ denote
the restriction of $\fu$ to $S$.
Then we have the following:
\begin{align}
\label{groupini}
& \mbox{$\G$ is a group iff  $\fu\!\!\restriction_S$ is a permutation on $S$, for every $\fu$ in $\G$;}\\
\nonumber
& \mbox{if $\G$ is a group and $\fu$ is a nonindentity element in it, then $\fu\!\!\restriction_S\ne\id_S$ and}\\
\label{groupinii}
& \hspace*{3cm}\mbox{the order of the permutation $\fu\!\!\restriction_S$ divides $\ord_\G(\fu)$.}
\end{align}


\subsection{Automata: DFAs, NFAs, 2NFAs}

A \emph{two-way nondeterministic finite automaton} is a quintuple $\A = (Q, \Sigma, \delta, Q_0, F)$ that consists of an alphabet $\Sigma$, a finite set $Q$ of states with a subset $Q_0 \ne \emptyset$ of initial states and a subset $F$ of accepting states, and a transition function $\delta \colon Q \times \Sigma \to 2^{Q \times \{-1,0,1\}}$ indicating the next state and whether the head should move left ($-1$), right ($1$), or stay put. If $Q_0 = \{q_0\}$ and $|\delta(q, a)| = 1$, for all $q \in Q$ and $a \in \Sigma$, then $\A$ is \emph{deterministic}, in which case we write $\A = (Q, \Sigma, \delta, q_0, F)$.
If $\delta(q, a) \subseteq Q \times \{1\}$, for all $q \in Q$ and $a \in \Sigma$, then $\A$ is a \emph{one-way} automaton, and we write $\delta \colon Q \times \Sigma \to 2^Q$. As usual, DFA and NFA refer to one-way deterministic and non-deterministic finite automata, respectively, while 2DFA and 2NFA to the corresponding two-way automata. Given a 2NFA $\A$, we write $q \to_{a,d} q'$ if $(q', d) \in \delta(q,a)$; given an NFA $\A$, we write $q \to_{a} q'$ if $q' \in \delta(q,a)$.
A \emph{run} of a 2NFA $\A$ is a word in $(Q \times \mathbb N)^*$. A run $(q_0, i_0), \dots, (q_m, i_m)$ is a \emph{run of $\A$ on a word} $w = a_0 \dots a_n \in \Sigma^*$ if $q_0 \in Q_0$, $i_0 = 0$ and there exist $d_0, \dots, d_{m-1} \in \{-1,0,1\}$ such that $q_j \to_{a_j, d_j} q_{j+1}$ and $i_{j+1} = i_j+d_j$ for all $j$, $0 \leq j < m$. The  run is \emph{accepting} if $q_m \in F$, $i_m = n+1$. $\A$ \emph{accepts} $w \in \Sigma^*$ if there is an accepting run of $\A$ on $w$; the language $\L(\A)$ of $\A$ is the set of all words accepted by $\A$.

Given an NFA $\A$, states $q,q' \in Q$, and $w = a_0 \dots a_n \in \Sigma^*$, we write $q \to_w q'$ if either $w = \varepsilon$ and $q' = q$ or there is a run of $\A$ on $w$ that starts with $(q_0, 0)$ and ends with $(q', n+1)$. We say that a state $q \in Q$ is \emph{reachable} if $q' \to_w q$, for some $q' \in Q_0$ and $w \in \Sigma^*$.


Given a DFA $\A = (Q, \Sigma, \fu, q_0, F)$ and a
word $w \in \Sigma^\ast$, we define a function $\fu_w \colon Q \to Q$ by taking  $\fu_w(q) = q'$ iff $q \to_w q'$. We also define an equivalence relation $\simm$ on the set $\Qr\subseteq Q$ of reachable states by taking $q\simm q'$ iff,
for every $w \in \Sigma^\ast$, we have $\fu_w(q)\in F$ just in case $\fu_w(q')\in F$. We denote the $\simm$-class of $q$ by $\simclass{q}$, and let
$\simclass{X}=\{\simclass{q}\mid q\in X\}$ for any $X\subseteq \Qr$. Define $\fum_w\colon \simclass{\Qr\!}\to \simclass{\Qr\!}$ by taking
$\fum_w(\simclass{q})=\simclass{\fu_w(q)}$.
Then $\bigl(\simclass{\Qr\!},\Sigma,\fum,\simclass{q_0},\simclass{(F\cap \Qr)}\bigr)$ is the \emph{minimal DFA} whose language coincides with the language of $\A$.
Given a regular language $\L$, we denote by $\Amin$ the minimal DFA whose language is $\L$.

The \emph{transition monoid of} a DFA $\A$ is $M(\A) = (\{ \fu_w  \mid w \in \Sigma^\ast\},\cdot)$ with
$\fu_v\cdot\fu_w = \fu_{vw}$, for any $v,w$.
The \emph{syntactic monoid $M(\L)$ of $\L$} is the transition monoid $M(\Amin)$ of $\Amin$.
%
%
The \emph{syntactic morphism of} $\L$ is the map $\eta_\L$ from $\Sigma^*$ to the domain of $M(\L)$ defined by $\eta_\L(w) = \fum_w$. We call $\eta_\L$ \emph{quasi-aperiodic} if $\eta_L(\Sigma^t)$ is aperiodic for every $t<\omega$.

Suppose $\lang \in \{ \FO(<), \FO(<,\equiv), \FO(<,\MOD)\}$. A language $\L$ over an alphabet $\Sigma$ is \emph{$\lang$-definable} if there is an $\lang$-sentence $\varphi$ in the signature $\Sigma$, whose symbols are treated as unary predicates, such that, for any $w \in \Sigma^*$, we have $w=a_0\ldots a_n \in \L$ iff $\mathfrak S_w \models \varphi$, where $\mathfrak S_w$ is an FO-structure with domain $\{0,\dots,n\}$ ordered by $<$, in which $\mathfrak S_w \models a(i)$ iff $a=a_i$, for $0\leq i\leq n$.

%





Table~\ref{tab:algebra} summarises the known results that connect definability of a regular language $\L$ with properties of the syntactic monoid $M(\L)$ and syntactic morphism $\eta_\L$ (see~\cite{DBLP:journals/jcss/BarringtonCST92} for details) and with its circuit complexity under a reasonable binary encoding of $\L$'s alphabet (see, e.g.,~\cite[Lemma~2.1]{DBLP:journals/actaC/Bernatsky97}) and the assumption that $\ACC \ne \NCo$. We also remind the reader that a regular language is $\FO(<)$-definable iff it is star-free~\cite{Straubing94}, and that $\ACz \subsetneqq \ACC \subseteq \NCo$~\cite{Straubing94,DBLP:books/daglib/0028687}.


\section{Criteria of $\lang$-definability}

In this section, we show that the algebraic characterisations of FO-definability of $\L(\A)$ given in Table~\ref{tab:algebra} 
can be captured by `localisable' properties 
of the transition monoid of $\A$, for any given DFA $\A$.
Note that Theorem~\ref{DFAcrit}~$(i)$ was already observed in \cite{DBLP:journals/iandc/Stern85} and used in
proving that $\FO(<)$-definability of $\L(\A)$ is \PSpace-complete \cite{DBLP:journals/iandc/Stern85,DBLP:journals/TCS/ChoHyunh91,DBLP:journals/actaC/Bernatsky97}; while criteria $(ii)$ and $(iii)$ seem to be new. 

%
%

\begin{theorem}\label{DFAcrit}
For any DFA $\A=(Q,\Sigma,\delta,q_0,F)$, the following criteria hold\textup{:}
\begin{description}
\item[$(i)$] $\L(\A)$ is not $\FO(<)$-definable iff  $\A$ contains a nontrivial cycle, that is, there exist a word $u\in\Sigma^\ast$, a
state $q\in \Qr$, and a number $k\leq|Q|$ such that $q\not\simm \fu_u(q)$ and $q= \fu_{u^k}(q)$\textup{;}

\item[$(ii)$] $\L(\A)$ is not $\FO(<,\equiv)$-definable iff there are words $u,v\in\Sigma^\ast$, a
state $q\in \Qr$, and a number $k\leq |Q|$ such that $q\not\simm\fu_u(q)$, $q=\fu_{u^k}(q)$, $|v|=|u|$, and
$\fu_{u^i}(q)=\fu_{u^iv}(q)$, for every $i<k$\textup{;}


\item[$(iii)$] $\L(\A)$ is not $\FO(<,\MOD)$-definable iff there exist words $u,v\in\Sigma^\ast$, a
state $q\in \Qr$ and numbers $k,l\leq |Q|$ such that $k$ is an odd prime, $l>1$ and coprime to both $2$ and $k$,
$q\not\simm\fu_u(q)$, $q\not\simm\fu_v(q)$, $q\not\simm\fu_{uv}(q)$ and, for all $x\in\{u,v\}^\ast$, we have
$\fu_{x}(q)\simm\fu_{xu^{2}}(q)\simm\fu_{xv^{k}}(q)\simm\fu_{x(uv)^{l}}(q)$.
\end{description}
\end{theorem}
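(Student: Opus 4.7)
The plan is to derive each criterion from the algebraic characterisations of Table~\ref{tab:algebra}, using the minimal DFA $\Amin$ (whose transition monoid is $M(\L(\A))$) as a bridge from monoid-level statements to the pointwise cycle conditions on $\A$.

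For~(i), aperiodicity of $M(\L(\A))$ unpacks directly: a nontrivial cycle in $\A$ projects under $q\mapsto[q]$ to a nontrivial $\fum_u$-cycle in $\Amin$, yielding via~(\ref{ginii})--(\ref{giniii}) a nontrivial cyclic subgroup of $M(\L(\A))$. Conversely, a nontrivial group in $M(\L(\A))$ contains a non-identity element $\fum_w$, which by~(\ref{idemp}) satisfies $\fum_w^n\neq\fum_w^{n+1}$ for every $n$; hence some $\fum_w$-periodic class in $\Amin$ has period $>1$ (else the powers would stabilise), so applying~(\ref{fpii})--(\ref{fpiii}) to the $\fu_w$-orbit of a reachable representative in $\A$ yields a cycle of length $\leq|Q|$ on which $q\not\simm\fu_w(q)$.

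For~(ii), I would use the equivalence: $\L(\A)$ is not $\FO(<,\equiv)$-definable iff $\eta_\L(\Sigma^t)$ contains a nontrivial group for some $t$. Forward direction: given $(u,v,q,k)$ with $|u|=|v|=t$, the orbit $O=\{q,\fu_u(q),\dots,\fu_u^{k-1}(q)\}$ is pointwise fixed by $\fu_v$, so $\fum_v$ fixes $[O]$ pointwise in $\Amin$; a nontrivial cyclic group inside $\eta_\L(\Sigma^{N})$ for an appropriate multiple $N$ of $t$ is then obtained from length-$N$ words mixing $u$'s and $v$'s that act on $[O]$ as distinct rotations (exploiting $\fum_{u^k}|_{[O]}=\fum_{v^k}|_{[O]}=\id$), with $N$ chosen via~(\ref{gini}) large enough that the required powers are idempotent. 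Reverse direction: from a nontrivial group $\G\subseteq\eta_\L(\Sigma^t)$ with idempotent $e=\fum_{w_0}$ and non-identity $\fum_u\in\G$, choose a reachable $[q]$ with $e([q])=[q]$ and $\fum_u([q])\neq[q]$, set $v=w_0$, and lift $[q]$ to a cycle-state $q^*$ on the $\fu_u$-orbit of one of its representatives in $\A$.

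For~(iii), I would invoke the Kaplan--Levy criterion. The cycle condition yields a set $C=\{[\fu_x(q)]\mid x\in\{u,v\}^*\}$ closed under $\fum_u,\fum_v$ on which $\fum_{u^2}|_C=\fum_{v^k}|_C=\fum_{(uv)^l}|_C=\id_C$, while $\fum_u|_C,\fum_v|_C,\fum_{uv}|_C$ are all non-identity (as $q$ is not $\simm$-equivalent to any of $\fu_u(q),\fu_v(q),\fu_{uv}(q)$). Setting $a=\fum_u|_C$, $b=\fum_v|_C$, $c=(\fum_{uv}|_C)^{-1}=\fum_{(uv)^{l-1}}|_C$, we have $abc=\id_C$ with orders exactly $2,k,l$ (using that $k$ is prime and $l$ is coprime to $2k$), so $\langle a,b\rangle\leq\mathrm{Sym}(C)$ is unsolvable by Kaplan--Levy. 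To realise an unsolvable group inside $M(\L(\A))$ itself, I would pick via~(\ref{gini}) an idempotent $e\in M(\L(\A))$ with $e|_C=\id_C$ (e.g.\ a suitable power of $\fum_{u^2}$) and consider the maximal subgroup at $e$ inside the submonoid $eM(\L(\A))e$; restriction to $C$ is a monoid homomorphism whose image contains $\langle a,b\rangle$, so the domain contains an unsolvable group (since homomorphic images of solvable groups are solvable). Conversely, an unsolvable group in $M(\L(\A))$ yields by Kaplan--Levy three elements $a,b,c$ from which words $u,v$ realising the cycle condition in $\A$ can be read off.

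The main obstacle in the reverse directions of~(ii) and~(iii) is promoting algebraic $\simm$-equivalences in $\Amin$ to the pointwise equalities in $\A$ demanded by the theorem. The algebra only guarantees $\fu_v(\fu_{u^i}(q))\simm\fu_{u^i}(q)$, not equality in $Q$; I would address this by replacing $u,v$ by equal-length powers $u^M,v^M$ chosen via~(\ref{gini}) and~(\ref{fpi})--(\ref{fpiii}) so that $\fu_{v^M}$ stabilises the $\fu_u$-cycle on the nose, and by carefully selecting the cycle representative $q^*$ on the $\fu_u$-orbit of a lifted class.
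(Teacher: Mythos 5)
Your overall route coincides with the paper's: both directions of each criterion are derived from the algebraic characterisations of Table~\ref{tab:algebra} via the minimal DFA, using the auxiliary facts \eqref{idemp}--\eqref{groupinii} and the Kaplan--Levy criterion, and your treatment of $(i)$ is essentially the paper's argument. You also correctly isolate the key difficulty in $(ii)$ --- turning $\simm$-equivalences in $\Amin$ into the exact equalities $\fu_{u^i}(q)=\fu_{u^iv}(q)$ in $Q$ --- but your proposed fix does not work. Making $\fu_{v^M}$ idempotent only forces it to be the identity on its \emph{own} image; the $\fu_{u^M}$-cycle through your chosen state need not lie in that image, and no choice of representative on that cycle controls the other states $\fu_{(u^M)^i}(q)$, so there is no reason $\fu_{v^M}$ should fix them on the nose. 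The paper's construction is different at precisely this point: taking $v\in\Sigma^t$ with $\fum_v$ the unit of the group and $\ell$ with $\fu_{v^\ell}$ idempotent, it \emph{interleaves} $v$ into $u$, setting $\bar u=uv^{2\ell-1}$ and $\bar v=v^{2\ell}$, so that every state on the $\bar u$-cycle lies in the image of $\fu_{v^{2\ell-1}}$, on which $\fu_{v^{2\ell}}$ acts identically, while $|\bar u|=|\bar v|$ and $\fum_{\bar u}=\fum_u$ are preserved. That interleaving is the missing idea. (In the converse direction of $(ii)$ your ``$N$ chosen via \eqref{gini}'' also glosses over why the resulting cyclic group sits inside a \emph{single} $\eta_\L(\Sigma^N)$; the paper needs a pigeonhole argument on the sets $\eta_\L(\Sigma^{i|u|})$ to find $z$ with $\eta_\L(\Sigma^{z|u|})$ closed under composition.)

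In $(iii)$, the passage from the unsolvable permutation group $\langle a,b\rangle\le\mathrm{Sym}(C)$ to an unsolvable subgroup of $M(\L(\A))$ is not sound as you state it. Elements of the maximal subgroup at an idempotent $e$ of $eM(\L(\A))e$ need not map $C$ into $C$, so ``restriction to $C$'' is not a well-defined homomorphism on that subgroup, and you do not show that $a$ and $b$ lift to it --- there is no reason $e\fum_v e$ should be a unit of $eM(\L(\A))e$ when $e$ is a power of $\fum_{u^2}$. The paper instead first chooses $w\in\{u,v\}^*$ so that the kernel relation $\approx_w$ has \emph{stabilised} ($\approx_w=\approx_{wy}$ for all $y\in\{u,v\}^*$) and $\fum_w$ is idempotent; this stabilisation is exactly what makes each $\fum_{wxw}$ injective on $S=\{p\mid\fum_w(p)=p\}$, so that $\{\fum_{wxw}\mid x\in\{u,v\}^*\}$ is a genuine group, which then maps homomorphically onto an unsolvable group of permutations of the orbit of $\simclass{q}$. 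Finally, in the converse direction of $(iii)$ you give no argument for producing a \emph{single} state $q$ with $q\not\simm\fum_u(q)$, $q\not\simm\fum_v(q)$ and $q\not\simm\fum_{uv}(q)$ simultaneously, nor for the bounds $k,l\le|Q|$; the paper obtains these by restricting to $S$ (so that \eqref{groupini}--\eqref{groupinii} apply) and by a small case analysis on the cycle structure of $\fum_u\!\!\restriction_S$ and $\fum_v\!\!\restriction_S$, and neither point is automatic.
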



\begin{proof}
Throughout, we use the algebraic criteria of Table~\ref{tab:algebra} for $\L=\L(\A)$.
Thus,  $M(\L)$ is the transition monoid of the minimal DFA $\A_{\L(\A)}$, whose transition function we denote by $\fum$. 

$(i)~(\Rightarrow)$ Suppose $\G$ is a nontrivial group in $M(\A_{\L(\A)})$.
Let $u\in\Sigma^\ast$ be such that $\fum_u$ is a nonidentity element in $\G$.
We claim that there is $p\in \Qr$ such that $\fum_{u^n}(\simclass{p})\ne\fum_{u^{n+1}}(\simclass{p})$ for any $n>0$.
Indeed, otherwise for every $p\in \Qr$ there is $n_p>0$ with $\fum_{u^{n_p}}(\simclass{p})=\fum_{u^{n_p+1}}(\simclass{p})$.
Let $n=\max\{n_p\mid p\in \Qr\}$. Then $\fum_{u^n}=\fum_{u^{n+1}}$, contrary to~\eqref{idemp}.

By \eqref{fpi}, there is $m\geq 1$ with $\fum_{u^{2m}}(\simclass{p})=\fum_{u^{m}}(\simclass{p})$.
Let $\simclass{s}=\fum_{u^m}(\simclass{p})$. Then $\simclass{s}=\fum_{u^m}(\simclass{s})$, and so the restriction of $\fu_{u^m}$ to the subset $\simclass{s}$
of $\Qr$  is an $\simclass{s}\to \simclass{s}$ function.
By \eqref{fpii},
there exist $q\in \simclass{s}$ and $n\geq 1$
such that
$(\fu_{u^{m}})^n(q)=q$. Thus, $\fu_{u^{mn}}(q)=q$, and so by \eqref{fpiii}, there is
$k\leq |Q|$ with $\fu_{u^{k}}(q)=q$.
As $\simclass{s}\ne\fum_{u}(\simclass{s})$, we also have $q\not\simm\fu_u(q)$, as required.

$(i)~(\Leftarrow)$
Suppose the condition holds for $\A$.
Then there are $u\in\Sigma^\ast$,
$q\in \simclass{\Qr\!}$, and $k<\omega$ such that $q\ne\fum_u(q)$ and $q=\fum_{u^k}(q)$.
Then $\fum_{u^n}\ne\fum_{u^{n+1}}$ for any $n>0$. Indeed, otherwise we would have some $n>0$ with $\fum_{u^n}(q)=\fum_{u^{n+1}}(q)$. Let $i,j$ be such that $n=i\cdot k+j$ and $j<k$. Then
\[
q=\fum_{u^k}(q)=\fum_{u^{(i+1)k}}(q)=\fum_{u^n u^{k-j}}(q)=\fum_{u^{n+1}u^{k-j}}(q)=\fum_{u^{(i+1)k}u}(q)=\fum_u(q).
\]
So, by \eqref{ginii} and \eqref{giniii}, $\G_{\fum_u}$ is a nontrivial group in $M(\A_{\L(\A)})$.


$(ii)~(\Rightarrow)$
Let $\G$ be a nontrivial group in $\eta_\L(\Sigma^t)$, for some $t<\omega$, and
let $u\in\Sigma^t$ be such that $\fum_u$ is a nonidentity element in $\G$.
As shown in the proof of $(i)~(\Rightarrow)$,
there exist $s\in \Qr$ and $m\geq 1$ such that $\simclass{s}\ne\fum_{u}(\simclass{s})$ and
$\simclass{s}=\fum_{u^m}(\simclass{s})$.
Now let $v\in\Sigma^t$ be such that $\fum_v$ is the \unit{} element in $\G$, and consider $\fu_v$.
By \eqref{gini}, there is $\ell\geq 1$ such that $\fu_{v^\ell}$ is idempotent.
Then $\fu_{v^{2\ell-1}v^{2\ell}}=\fu_{v^{2\ell-1}}$. Thus, if we let $\bar{u}=uv^{2\ell-1}$ and $\bar{v}=v^{2\ell}$, then
$|\bar{u}|=|\bar{v}|$ and
$\fu_{\bar{u}^i}=\fu_{\bar{u}^i\bar{v}}$ for any $i<\omega$.
Also,  $\fum_{u^i}=\fum_{\bar{u}^i}$ for every $i\geq 1$, and so
the restriction of $\fu_{\bar{u}^m}$ to $\simclass{s}$ is an $\simclass{s}\to \simclass{s}$ function.
By \eqref{fpii},
there exist $q\in \simclass{s}$ and $n\geq 1$ such that
$(\fu_{\bar{u}^{m}})^n(q)=q$.
Thus, $\fu_{\bar{u}^{mn}}(q)=q$, and so by \eqref{fpiii}, there is some $k\leq |Q|$ with $\fu_{\bar{u}^{k}}(q)=q$.
As $\simclass{s}\ne\fum_u(\simclass{s})=\fum_{\bar{u}}(\simclass{s})$, we also have $q\not\simm\fu_{\bar{u}}(q)$, as required.

$(ii)~(\Leftarrow)$
If the condition holds for $\A$, then  there exist
$u,v\in\Sigma^\ast$, $q\in \simclass{\Qr\!}$, and $k<\omega$ such that $q\ne\fum_u(q)$, $q=\fum_{u^k}(q)$, $|v|=|u|$,
and $\fum_{u^i}(q)=\fum_{u^iv}(q)$, for every $i<k$.
As $M(\A_{\L(\A)})$ is finite, it has finitely many subsets.
So there exist $i,j\geq 1$ such that $\eta_\L(\Sigma^{i|u|})=\eta_\L(\Sigma^{(i+j)|u|})$.
Let $z$ be a multiple of $j$ with $i\leq z<i+j$.
Then $\eta_\L(\Sigma^{z|u|})=\eta_\L(\Sigma^{(z|u|)^2})$, and so $\eta_\L(\Sigma^{z|u|})$ is closed under the composition of functions (that is, the semigroup operation of $M(\A_{\L(\A)})$). Let $w=uv^{z-1}$ and consider the group
$\G_{\fum_w}$ (defined above \eqref{gini}--\eqref{giniii}). Then $G_{\fum_w}\subseteq\eta_\L(\Sigma^{z|u|})$.
We claim that $\G_{\fum_w}$ is nontrivial. Indeed, we have $\fum_w(q)=\fum_{uv^{z-1}}(q)=\fum_u(q)\ne q$.
On the other hand, $\fum_{w^{k}}(q)=\fum_{u^{k}}(q)=q$.
By the proof of $(i)~(\Leftarrow)$, $\G_{\fum_w}$ is nontrivial.

$(iii)~(\Rightarrow)$
Suppose $\G$ is an unsolvable group in $M(\A_{\L(\A)})$. By the Kaplan--Levy criterion,
$\G$ contains three functions $a,b,c$ such that $\ord_\G(a)=2$, $\ord_\G(b)$ is an odd prime,
$\ord_\G(c)>1$ and coprime to both $2$ and $\ord_\G(b)$, and $c\circ b\circ a=e_\G$ for the \unit{} element $e_\G$ of $\G$.
Let $u,v\in\Sigma^\ast$ be such that $a=\fum_u$, $b=\fum_v$ and $c=(\fum_{uv})^-$,
and let $k=\ord_\G(\fum_{v})$ and $r=\ord_\G(c)=\ord_\G(\fum_{uv})$. Then $r>1$ and coprime to both $2$ and $k$.
Let $S=\bigl\{p\in \simclass{\Qr\!}\mid e_\G(p)=p\bigr\}$.
As $\fum_x$ is $\G$ for every $x\in\{u,v\}^\ast$, we have $e_\G\circ\fum_x=\fum_x$.
Thus,
\begin{align*}
& \fum_{xu^2}(q)=\fum_{u^2}\bigl(\fum_x(q)\bigr)=e_\G\bigl(\fum_x(q)\bigr)=(e_\G\circ\fum_x)(q)=\fum_x(q),\quad\mbox{and}\\
& \fum_{xv^k}(q)=\fum_{v^k}\bigl(\fum_x(q)\bigr)=e_\G\bigl(\fum_x(q)\bigr)=(e_\G\circ\fum_x)(q)=\fum_x(q),\quad\mbox{for every $q\in S$}.
\end{align*}
Then, by \eqref{groupini}, each of $\fum_u\!\!\restriction_S$, $\fum_v\!\!\restriction_S$
and $\fum_{uv}\!\!\restriction_S$ is a permutation on $S$.
By \eqref{groupinii}, the order of $\fum_u\!\!\restriction_S$ is $2$, the order of $\fum_v\!\!\restriction_S$ is $k$,
and  the order $l$ of $\fum_{uv}\!\!\restriction_S$ is a $>1$ divisor of $r$, and so it is coprime to both $2$ and $k$.
Also, we have $k,l\leq |S|\leq |Q|$.
Further, for every $x$, if $q$ is in $S$ then $\fum_x(q)\in S$ as well. So we have
\[
\fum_{x(uv)^l}(q)=\fum_{(uv)^l}\bigl(\fum_x(q)\bigr)=(\fum_{uv}\!\!\restriction_S)^l\bigl(\fum_x(q)\bigr)=\id_S\bigl(\fum_x(q)\bigr)=\fum_x(q),\ \ \mbox{for all $q\in S$}.
\]
It remains to show that there is $q\in S$ with $q\ne\fum_u(q)$, $q\ne\fum_u(q)$, and $q\ne\fum_{uv}(q)$.
 %
%
Recall that the length of any cycle in a permutation divides its order.
First, we show there is $q\in S$ with $q\ne\fum_u(q)$ and $q\ne\fum_u(q)$. Indeed, as
$\fum_{u}\!\!\restriction_S\ne\id_S$, there is $q\in S$ such that $\fum_u(q)=q'\ne q$.
As the order of $\fum_{u}\!\!\restriction_S$ is $2$, $\fum_u(q')=q$.
If both $\fum_v(q)=q$ and $\fum_v(q')=q'$ were the case, then $\fum_{uv}(q)=q'$ and $\fum_{uv}(q')=q$ would hold, and so
$(qq')$ would be a cycle in $\fum_{uv}\!\!\restriction_S$, contrary to $l$ being coprime to $2$.
So take some $q\in S$ with $\fum_u(q)=q'\ne q$ and  $\fum_v(q)\ne q$. 
If $\fum_v(q')\ne q$ then $\fum_{uv}(q)\ne q$, and so $q$ is a good choice.
Suppose $\fum_v(q')=q$, and let $q''=\fum_v(q)$. Then $q''\ne q'$, as $k$ is odd. Thus, $\fum_{uv}(q')\ne q'$, and
so $q'$ is a good choice.

$(iii)~(\Leftarrow)$
Suppose $u,v\in\Sigma^\ast$, $q\in \Qr$, and $k,l<\omega$ are satisfying the conditions.
For every $x\in\{u,v\}^\ast$, we define an equivalence relation $\approx_x$ on $\simclass{\Qr\!}$ by taking $p\approx_x p'$ iff
$\fum_x(p)=\fum_x(p')$. Then we clearly have that $\approx_x\subseteq \approx_{xy}$, for all $x,y\in \{u,v\}^\ast$.
As $Q$ is finite, there is $z\in \{u,v\}^\ast$ such that
$\approx_z= \approx_{zy}$ for all $y\in \{u,v\}^\ast$.
Take such a $z$. By \eqref{gini}, $\fum_z^n$ is idempotent for some $n\geq 1$. We let $w=z^n$.
Then $\fum_w$ is idempotent and we also have that
\begin{equation}\label{fp}
\approx_w\,=\, \approx_{wy}\quad\mbox{for all $y\in \{u,v\}^\ast$.}
\end{equation}
Let $G_{\{u,v\}}=\bigl\{\fum_{wxw}\mid x\in\{u,v\}^\ast\bigr\}$. Then $G_{\{u,v\}}$ is closed under composition. Let $\G_{\{u,v\}}$ be the subsemigroup of $M(\A_{\L(\A)})$  with universe $G_{\{u,v\}}$.
Then $\fum_w=\fum_{w\varepsilon w}$ is an \unit{} element in $\G_{\{u,v\}}$.
Let $S=\{p\in \simclass{\Qr\!}\mid \fum_w(p)=p\}$.
We show that
\begin{equation}\label{Sperm}
\mbox{for every $\fum$ in $\G_{\{u,v\}}$, $\fum\!\!\restriction_S$ is a permutation on $S$,}
\end{equation}
and so $\G_{\{u,v\}}$ is a group by \eqref{groupini}.
Indeed, take some $x\in\{u,v\}^\ast$. As
$\fum_w\bigl(\fum_{wxw}(p)\bigr)=\fum_{wxww}(p)=\fum_{wxw}(p)$ for any $p\in \simclass{\Qr\!}$, $\fum_{wxw}\!\!\restriction_S$ is an $S\to S$ function. Also, if $p,p'\in S$ and $\fum_{wxw}(p)=\fum_{wxw}(p')$ then $p\approx_{wxw}p'$. Thus, by \eqref{fp}, $p\approx_w p'$, that is,
$p=\fum_w(p)=\fum_w(p')=p'$, proving \eqref{Sperm}.

We show that $\G_{\{u,v\}}$ is unsolvable by finding an unsolvable homomorphic image of it.
Let $R=\bigl\{p\in \simclass{\Qr\!}\mid p=\fum_x(q)\mbox{ for some }x\in\{u,v\}^\ast\bigr\}$.
We claim that, for every $\fum$ in $\G_{\{u,v\}}$, $\fum\!\!\restriction_R$ is a permutation on $R$, and so the function
$h$ mapping every $\fum$ to $\fum\!\!\restriction_R$ is a group homomorphism from $\G_{\{u,v\}}$ to the group of all permutations on $R$. Indeed, by \eqref{Sperm}, it is enough to show that $R\subseteq S$.
Let $\overline{w}=\overline{z}_{m}\dots\overline{z}_1$, where $w=z_1\dots z_m$ for some $z_i\in\{u,v\}$,
$\overline{u}=u$ and $\overline{v}=v^{k-1}$.
Since $\fum_{x}(q)=\fum_{x(u)^{2}}(q)=\fum_{x(v)^{k}}(q)$ for all $x\in\{u,v\}^\ast$,
we obtain that
\begin{multline}\label{barw}
\fum_{yw\overline{w}}(q)=\fum_{\overline{z}_{m-1}\dots\overline{z}_1}\bigl(\fum_{yz_1\dots z_m\overline{z}_m}(q)\bigr)
=\fum_{\overline{z}_{m-1}\dots\overline{z}_1}\bigl(\fum_{yz_1\dots z_{m-1}}(q)\bigr)=\dots\\
\dots =\fum_{\overline{z}_1}\bigl(\fum_{yz_1}(q)\bigr)=\fum_{xz_1\overline{z}_1}(q)=\fum_y(q),\quad
\mbox{for all $y\in\{u,v\}^\ast$.}
\end{multline}
Now suppose $p\in R$, that is, $p=\fum_x(q)$ for some $x\in \{u,v\}^\ast$. Then, by \eqref{barw},
\[
\fum_w(p)=\fum_w\bigl(\fum_x(q)\bigr)=\fum_{xw}(q)=\fum_{xww\overline{w}}(q)=\fum_{xw\overline{w}}(q)=\fum_x(q)=p,
\]
and so $p\in S$, as required.

Now let $\G$ be the image of $\G_{\{u,v\}}$ under $h$.
We prove that $\G$ is unsolvable by finding three elements $a,b,c$ in it such that $\ord_\G(a)=2$, $\ord_\G(b)=k$,
$\ord_\G(c)$ is coprime to both $2$ and $\ord_\G(b)$, and $c\circ b\circ a=\id_R$  (the \unit{} element of $\G$).
So let $a=h(\fum_{wuw})$,  $b=h(\fum_{wvw})$, and $c=h(\fum_{wuvw})^-$.
Observe that, for every $x\in\{u,v\}^\ast$, $h(\fum_{wxw})=\fum_x\!\!\restriction_R$, and so $c\circ b\circ a=\id_R$.
Also, for any $\fum_x(q)\in R$, $a^2\bigl(\fum_x(q)\bigr)=(\fum_u\!\!\restriction_R)^2\bigl(\fum_x(q)\bigr)=\fum_{xu^2}(q)=\fum_x(q)$
by our assumption, so $a^2=\id_R$. On the other hand, $q\in R$ as $\fum_\varepsilon(q)=q$, and
$\id_R(q)=q\ne\fum_u(q)$ by assumption, so $a\ne\id_R$.  As $\ord_\G(a)$ divides $2$, $\ord_\G(a)=2$ follows.
Similarly, we can show that $\ord_\G(b)=k$ (using that $\fum_{xv^k}(q)=\fum_x(q)$ for every $x\in\{u,v\}^\ast$, and $u\ne\fum_v(q)$). Finally (using that $\fum_{x(uv)^l}(q)=\fum_x(q)$ for every $x\in\{u,v\}^\ast$, and $u\ne\fum_{uv}(q)$),
we obtain that $h(\fum_{wuvw})^l=\id_R$ and $h(\fum_{wuvw})\ne\id_R$.
Therefore, it follows that $\ord_\G(c)=\ord_\G\bigl(h(\fum_{wuvw})^-\bigr)=\ord_\G\bigl(h(\fum_{wuvw})\bigr)>1$ and divides $l$, and so coprime to both $2$ and $k$, as required.
\end{proof}

\section{Deciding FO-definability: \PSpace-hardness}\label{sec:reglang}

Kozen~\cite{Kozen77} showed that deciding whether the intersection of the languages recognised by a set of given deterministic DFAs is non-empty is \PSpace-complete. By carefully analysing Kozen's lower bound proof and using the 
criterion of Theorem~\ref{DFAcrit}~$(i)$, Cho and Huynh~\cite{DBLP:journals/TCS/ChoHyunh91} established that
deciding $\FO(<)$-definability of $\L(\A)$ is \PSpace-hard, for any given minimal DFA $\A$. We generalise their construction and use the criteria in Theorem~\ref{DFAcrit}~$(ii)$--$(iii)$ to cover $\FO(<,\equiv)$- and $\FO(<,\MOD)$-definability as well. 

\begin{theorem}\label{DFAhard}
For any $\lang \in \{ \FO(<), \FO(<,\equiv), \FO(<,\MOD)\}$,  deciding $\lang$-defina\-bi\-lity of the language $\L(\A)$ of a given minimal DFA $\A$ is \PSpace-hard.
\end{theorem}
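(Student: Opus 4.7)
The plan is to reduce in polynomial time from the \PSpace-complete DFA intersection non-emptiness problem~\cite{Kozen77}: given DFAs $\A_1,\dots,\A_n$ over an alphabet $\Sigma$, decide whether $\bigcap_i \L(\A_i)\neq\emptyset$. I would construct, in polynomial time from $\A_1,\dots,\A_n$, a minimal DFA $\A$ such that $\L(\A)$ is \emph{not} $\lang$-definable iff the intersection is non-empty. The starting point is the Cho--Huynh construction~\cite{DBLP:journals/TCS/ChoHyunh91}: their $\A$ simulates the $\A_i$ one after another on a looped word, and Theorem~\ref{DFAcrit}~$(i)$ is witnessed precisely when some $w\in\bigcap_i\L(\A_i)$ exists, while minimality is enforced by dedicated test continuations attached to every state.

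To handle $\FO(<,\equiv)$, I would augment the alphabet of $\A$ by a fresh letter $b$ and extend $\fu$ so that $\fu_b$ acts as the identity on every reachable state of the would-be cycle of Cho--Huynh, and as some already-present ``safe'' transition elsewhere. Given any Cho--Huynh witness $u,q,k$ of criterion~$(i)$, the word $v:=b^{|u|}$ then satisfies $|v|=|u|$ and $\fu_{u^i v}(q)=\fu_{u^i}(q)$ for all $i<k$, witnessing criterion~$(ii)$; the converse direction follows because any witness to~$(ii)$ already witnesses~$(i)$. For $\FO(<,\MOD)$, I would graft onto $\A$ a gadget realising the alternating group $A_5$ via two new letters $a$ and $c$ permuting an auxiliary $5$-element orbit as an involution and a $3$-cycle whose product has order $5$ (the standard $(2,3,5)$-presentation of $A_5$), designed so that $\fu_a$ and $\fu_c$ act as the identity on all other reachable states \emph{unless} the prefix read so far encodes a successful run of the product $\A_1\times\cdots\times\A_n$ on an accepting word. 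Taking $u:=a$, $v:=c$, $k:=3$, $l:=5$ in criterion~$(iii)$ then yields the required witnesses iff a common accepting word exists.

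The main obstacle lies in the ``no'' direction, where I must show that, when $\bigcap_i\L(\A_i)=\emptyset$, the transition monoid of the minimal DFA equivalent to $\A$ is, respectively, aperiodic, has quasi-aperiodic syntactic morphism, and contains only solvable groups. This requires a careful ``arming'' mechanism so that the auxiliary letters $b$, $a$, $c$ only activate their non-trivial behaviour once a common accepting prefix has been read, together with auxiliary test letters that separate $\simm$-classes without re-introducing cycles or unsolvable subgroups. Once the construction is verified to behave inertly in the ``no'' case, correctness in both directions follows directly by applying the corresponding clause of Theorem~\ref{DFAcrit}---exactly as in Cho--Huynh, but with the strengthened criteria $(ii)$ and $(iii)$ in place of $(i)$.
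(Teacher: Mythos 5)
Your overall architecture --- reduce from an intersection-emptiness problem via the Cho--Huynh cyclic arrangement of component DFAs and then read off definability through Theorem~\ref{DFAcrit} --- is also the paper's architecture, and your treatment of $\FO(<,\equiv)$ (a fresh letter acting as the identity on the cycle states, so that a witness for criterion $(i)$ lifts to one for criterion $(ii)$, and any witness for $(ii)$ is already one for $(i)$) is essentially what the paper does with the letter $\natural$ in $\aut^p_\equiv$.

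The $\FO(<,\MOD)$ case, however, has a genuine gap, in two places. First, your ``arming'' mechanism --- $\fu_a$ and $\fu_c$ act nontrivially only when ``the prefix read so far encodes a successful run of the product $\A_1\times\cdots\times\A_n$'' --- cannot be realised by a polynomial-size DFA: a DFA can condition on the prefix only through its current state, and tracking a run of the product requires its exponential state space; indeed, recognising $\bigcap_i\L(\A_i)$ is exactly the \PSpace-hard problem you are reducing from. The Cho--Huynh mechanism avoids arming altogether: each cycle edge $s_i\to_a s_j$ is replaced by a copy of $\A_i$, so that a \emph{single} word $u$ whose powers close the cycle must be accepted by every $\A_i$ simultaneously. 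Second, and independently, the choice of $A_5$ does not support the ``no'' direction. If the two letters always realise the $(2,3,5)$ generators on the auxiliary orbit, the transition monoid contains $A_5$ unconditionally and the language is never $\FO(<,\MOD)$-definable; if they do not, you are back to the arming problem. What is needed is that \emph{every} unsolvable subgroup of the ambient permutation group forces a full traversal of the component DFAs. The paper achieves this by making the ambient group $\text{\sc PSL}_2(\pp)$ for a prime $\pp>5$ with $\pp\not\equiv\pm1\ (\mathrm{mod}\ 10)$ (still polynomial in $\ppn$), because then all proper subgroups are solvable (Lemma~\ref{algebralemma}); hence any witness to criterion $(iii)$ generates the whole group, which requires the expensive generator and therefore a word in $\bigcap_i\L(\A_i)$. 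Your sketch names the ``no'' direction as ``the main obstacle'' but supplies no mechanism for it; that mechanism --- the group-theoretic choice of $\text{\sc PSL}_2(\pp)$ combined with the same-word-around-the-cycle trick --- is the substantive content of the proof. (A smaller issue: reducing from \emph{generic} DFA intersection non-emptiness is not enough; the component DFAs must themselves be free of nontrivial cycles and pin down at most one common word, which is why the paper, like Cho--Huynh, reduces from Turing machine acceptance and builds the $\A_i$ explicitly.)
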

\begin{proof}
Let $\M$ be a deterministic Turing machine that decides a language using at most $\ppn=P_{\M}(n)$ tape cells on any input of size $n$, for some polynomial $P_{\M}$. Given such an $\M$ and an input $\boldsymbol{x}$,
our aim is to define three minimal DFAs whose languages are, respectively, $\FO(<)$-, $\FO(<,\equiv)$-, and $\FO(<,\MOD)$-definable iff $\M$ rejects $\boldsymbol{x}$, and whose sizes are polynomial in $\ppn$ and the size $|\M|$ of $\M$.


Suppose $\M =(Q, \Tmabc, \Tmtran, \B, q_0, \qa)$ with a set $Q$ of states, tape alphabet $\Tmabc$ with $\B$ for blank, transition function $\Tmtran$, initial state $q_0$ and accepting state $\qa$.
Without loss of generality we assume that
$\M$ erases the tape before accepting, its head is at the left-most cell in an accepting configuration, 
and if $\M$ does not accept the input, it runs forever.
Given an input word $\boldsymbol{x}=x_1\dots x_n$ over $\Tmabc$,
we represent configurations $\conf$ of the computation of $\M$ on $\boldsymbol{x}$ by
the $\ppn$-long word written on the tape (with sufficiently many blanks at the end) in which the symbol $y$ in the active cell is replaced by the pair $(q,y)$ for the current state $q$.
The accepting computation of $\M$ on $\boldsymbol{x}$ is encoded by a word
$\sharp\, \conf_1 \, \sharp \, \conf_2 \, \sharp\,  \dots \, \sharp \, \conf_{k-1} \, \sharp \, \conf_{k} \flat$ over
the alphabet $\Aiabc=\Tmabc\cup(Q\times\Tmabc)\cup\{\sharp,\flat\}$,
with $\conf_1,\conf_2,\dots,\conf_k$ being the subsequent configurations. In particular, $\conf_1$ is the initial configuration on $\boldsymbol{x}$ (so it is of the form $(q_0,x_1)x_2\dots x_n\B\dots\B$), and $\conf_k$ is the accepting configuration (so it is of the form $(\qa,\B)\B\dots\B$).
As usual for this representation of computations, we may regard $\Tmtran$ as a partial function
from  $\bigl(\Tmabc\cup(Q\times\Tmabc)\cup\{\sharp\}\bigr)^3$ to $\Tmabc\cup(Q\times\Tmabc)$ with $\Tmtran(\sigma^j_{i-1},\sigma^j_i,\sigma^j_{i+1})=\sigma^{j+1}_i$ for each $j< k$, where $\sigma^j_i$ is the $i$th symbol of $\conf^j$.

%
%

Let $p_{\M,\boldsymbol{x}}=\pp$ be the first prime such that $\pp\geq\ppn+2$ and $\pp\not\equiv\pm 1\ (\text{mod}\ 10)$. By~\cite[Corollary 1.6]{Illin2018}, $\pp$ is polynomial in $\ppn$.  
Our first aim is to construct a $p+1$-long sequence $\A_i$ of disjoint minimal DFAs over $\Aiabc$.
Each $\A_i$ has size polynomial in $\ppn$ and $|\M|$,
and it
checks certain properties of an accepting computation on $\boldsymbol{x}$ such that
$\M$ accepts $\boldsymbol{x}$ iff the intersection of the $\L(\A_i)$ is not empty and
consists of the single word encoding the accepting computation on $\boldsymbol{x}$.


We define each $\A_i$ as an NFA, and assume that it can be turned to a DFA by adding a `trash state' $\tst_i$
looping on itself with every $\sigma\in\Aiabc$, and adding the missing transitions leading to $\tst_i$.
The DFA $\A_{0}$ checks that an input starts with the initial configuration on $\boldsymbol{x}$ and ends with the accepting configuration:\\[2pt]
\centerline{
\begin{tikzpicture}[->,thick,scale=0.7,node distance=2cm, transform shape]
\node[state, initial] (1) {$t_0$};
\node[state, right of  =1] (2) {$q^0$};
\node[state, right  =1.2cm of 2] (3) {$q^1$};
\node[right  of= 3](4){$\ldots$};
\node[state, right  of= 4] (5) {$q^{n}$};
\node[right of=5](6){$\ldots$};
\node[state, right  of= 6] (7) {$q^{\ppn}$};
\node[state, below of= 7] (8) {$p$};
\node[state,left of= 8](9){$p^0$};
\node[state, left =1.2cm of 9] (10) {$p^1$};
\node[left of=10](11){$\ldots$};
\node[state, left of= 11] (12) {$p^{\ppn}$};
\node[state, accepting, left of =12] (f) {$f_0$};
\draw (1) edge[above] node{$\sharp$} (2)
(2) edge[above] node{$(q_0,x_1)$} (3)
(3) edge[above] node{$x_2$} (4)
(4) edge[above] node{$x_n$} (5)
(5) edge[above] node{$\B$} (6)
(6) edge[above] node{$\B$} (7)
(8) edge [ loop right ] node{$y\neq\sharp,\flat$} (8)
(7) edge[above] node{$\sharp$} (9)
(9) edge[above] node{$(\qa,\B)$} (10)
(8) edge[above] node{$\sharp$} (9)
(9) edge[bend right,below] node{$y\neq (\qa,\B),\sharp,\flat$} (8)
(10) edge[above] node{$\B$} (11)
(11) edge[above] node{$\B$} (12)
(12) edge[above] node{$\flat$} (f)
;
\end{tikzpicture}
}\\
When $1 \le i \le \ppn$,
the DFA $\A_{i}$ checks, for all $j< k$, whether the $i$th symbol of $\conf^j$ changes `according to $\Tmtran$' in passing to $\conf^{j+1}$. The non-trash part of its transition function $\delta^i$ is as follows, for $1<i<N$. (For $i=1$ and $i=N$ some adjustments are needed.) For all $u,u',v,w,w',y,z\in\Tmabc\cup(Q\times\Tmabc)$, 
\begin{align*}
& \delta^i_\sharp(t_i)=q^0,\quad
\delta^i_u(q^j)=q^{j+1},\ \mbox{for $j=0,...,i-3$,}\quad
\delta^i_u(q^{i-2})=r_u,\quad
\delta^i_v(r_u)=r_{uv},\\
& \delta^i_w(r_{uv})=q^0_{\Tmtran(u,v,w)},\quad
 \delta^i_y(q^j_z)=q^{j+1}_z,\ \mbox{for $j=0,...,N-3$, $j\ne N-i-1$,}\\
&  \delta^i_\sharp(q^{N-i-1}_z)=q^{N-i}_z,\quad
\delta^i_\flat(q^{N-i-1}_z)=f_i,\quad
\delta^i_{u'}(q^{N-2}_z)=p_{u'z},\ \
\delta^i_z(p_{u'z})=r_{u'z},\\
& \mbox{see below, where $z=\Tmtran(u,v,w)$ and $z'=\Tmtran(u',z,w')$:}
\end{align*}
\\[-.4cm]
\centerline{
\begin{tikzpicture}[->,thick,scale=0.75,node distance=2cm, transform shape]
\node[state, initial] 	(ti) 	{$t_i$};
\node[state, right=5mm of ti]  (q0)  {$q^0$};
\node[below =7mm of q0]			(dotsq0)	{$\ldots$};
\node[state,below =7mm of dotsq0] 	(qi2) 	{$q^{i-2}$};
\node[below right=5mm of  qi2]	(dotsqi2) {$\ldots$};
\node[state, right=7mm of qi2]  (rup)  {$r_{u'}$};
\node[above right=5mm of rup]  (dotsrup)  {$\ldots$};
\node[state, above=5mm of  rup]  (ru)  {$r_{u}$};
\node[state, right=7mm of rup]  (rupz)  {$r_{u'z}$};
\node[above right=5mm of rupz]  (dotsrupz)  {$\ldots$};
\node[state, below right=9mm of rupz]  (qzp0)  {$q_{z'}^{0}$};
\node[right=5mm of qzp0]  (dotsqzp0)  {$\ldots$};
\node[right=5mm of ru]  (dotsru)  {$\ldots$};
\node[state, above=5mm of  dotsru]  (ruv)  {$r_{uv}$};
\node[below right=5mm of ruv]  (dotsruv)  {$\ldots$};
\node[state, right=8mm of ruv]  (qz0)  {$q_{z}^{0}$};
\node[right=5mm of qz0]  (dotsqz0)  {$\ldots$};
\node[state, right=5mm of dotsqz0]  (qzNi1)  {$q_{z}^{N-i-1}$};
\node[state, right=5mm of qzNi1]  (qzNi)  {$q_{z}^{N-i}$};
\node[right=5mm of qzNi]  (dotsqzNi)  {$\ldots$};
\node[state, right=5mm of dotsqzNi]  (qzN2)  {$q_{z}^{N-2}$};
\node[below left=5mm of  qzN2]	(dotsqzN2) {$\ldots$};
\node[state, below=14mm of qzN2]  (pupz)  {$p_{u'z}$};
\node[state,accepting, below right=7mm of qzNi1] (fi) {$f_i$};
\draw
(ti) 	edge[above] 	node{$\sharp$}  (q0)
(q0) 	edge[left]	 node{$y$} 	(dotsq0)
(dotsq0) 	edge[left]	 node{$y$} 	(qi2)
(qi2) 	edge[above] 	node{$u'$}  (rup)
(qi2) edge[above] 	node{$$}  (dotsqi2)
(rup) edge[below] 	node{$z$}  (rupz)
(rup) edge[above] 	node{$$}  (dotsrup)
(rupz) edge[above] 	node{$$}  (dotsrupz)
(rupz) edge[below left] 	node{$w'$}  (qzp0)
(qzp0) edge[above] 	node{$y$}  (dotsqzp0)
(qi2) 	edge[above left] 	node{$u$}  (ru)
(ru) edge[above] 	node{$$}  (dotsru)
(ru) 	edge[above left] 	node{$v$}  (ruv)
(ruv) 	edge[above] 	node{$w$}  (qz0)
(ruv) edge[above] 	node{$$}  (dotsruv)
(qz0) edge[above] 	node{$y$}  (dotsqz0)
(dotsqz0) 	edge[above] node{$y$}  (qzNi1)
(qzNi1) edge[above] 	node{$\sharp$}  (qzNi)
(qzNi) edge[above] 	node{$y$}  (dotsqzNi)
(dotsqzNi) edge[above] node{$y$}  (qzN2)
(qzN2) edge[left] node{$u'$}  (pupz)
(qzN2) edge[above] 	node{$$}  (dotsqzN2)
(pupz) edge[below] 	node{$z$}  (rupz)
(qzNi1) edge[below left] node{$\flat$}  (fi)
;
\end{tikzpicture}
}
\\
%
%
Finally, if $\ppn+1\le i\le \pp$  then $\A_i$ accepts all words over $\Aiabc$ with a single occurrence of $\flat$, which is the input's last character:\\
\centerline{
\begin{tikzpicture}[->,thick,scale=0.75, transform shape]
\node[state, initial] (1) {$t_i$};
\node[state, accepting, right = 1cm of  1] (2) {$f_i$};
\draw (1) edge[loop above] node{$\sigma\neq\flat$} (1)
(1) edge[above] node{$\flat$} (2)
;
\end{tikzpicture}
}
Note that $\A_{\pp-1}=\A_\pp$ as $\pp\geq\ppn+2$.
It is not hard to check that each of the $\A_i$ is a minimal DFA that does not contain nontrivial cycles and
the following holds:

\begin{lemma}\label{l:DFAs}
$\M$ accepts $\boldsymbol{x}$ iff $\bigcap_{i=0}^{\pp} \L(\A_i) \ne \emptyset$, in which case this language consists of a single word that encodes the accepting computation of $\M$ on $\boldsymbol{x}$.
%
\end{lemma}


Next, we require three sequences of DFAs $\aut^p_<$, $\aut^p_\equiv$ and $\aut^p_\MOD$, where $p>5$ is a prime number with  $p\not\equiv\pm 1\ (\text{mod}\ 10)$; see the picture below for $p=7$.\\[2pt]
\centerline{
\begin{tikzpicture}[->,thick,scale=0.6,node distance=2cm, transform shape]
  \node[state,initial,accepting] (0)at (180:2cm) {$s_0$};
  \foreach \x in {1,...,6}{%
    \pgfmathparse{(-\x)*(360/7)+180}]
    \node[state] (\x) at (\pgfmathresult:2cm) {$s_\x$};
  }
 \draw (0) edge[left] node{$a$} (1)
(1) edge[above] node{$a$} (2)
(2) edge[above right] node{$a$} (3)
(3) edge[right] node{$a$} (4)
(4) edge[below right] node{$a$} (5)
(5) edge[below] node{$a$} (6)
(6) edge[below left] node{$a$} (0)
;
\node (0) at (0,0) {\Large $\aut^7_<$};
\end{tikzpicture}
\
\begin{tikzpicture}[->,thick,scale=0.6,node distance=2cm, transform shape]
  \node[state,initial,accepting] (0)at (180:2cm) {$s_0$};
  \foreach \x in {1,...,6}{%
    \pgfmathparse{(-\x)*(360/7)+180}
    \node[state] (\x) at (\pgfmathresult:2cm) {$s_\x$};
  }
 \draw (0) edge[left] node{$a$} (1)
 (0) edge[loop right,right] node{$\natural$}(0)
(1) edge[above] node{$a$} (2)
(1) edge[loop left,left] node{$\natural$}(1)
(2) edge[above right] node{$a$} (3)
(2) edge[loop below,below] node{$\natural$}(2)
(3) edge[right] node{$a$} (4)
(3) edge[loop above,above] node{$\natural$}(3)
(4) edge[below right] node{$a$} (5)
(4) edge[loop below,below] node{$\natural$}(4)
(5) edge[below] node{$a$} (6)
(5) edge[loop above,above] node{$\natural$}(5)
(6) edge[below left] node{$a$} (0)
(6) edge[loop left,left] node{$\natural$}(6)
;
\node (0) at (0.5,0) {\Large$\aut^7_\equiv$};
\end{tikzpicture}
\
\begin{tikzpicture}[->,thick,scale=0.6,node distance=2cm, transform shape]
 \node[state,initial,accepting] (0)at (180:2cm) {$s_0$};
  \foreach \x in {1,...,6}{%
    \pgfmathparse{(-\x)*(360/7)+180}
    \node[state] (\x) at (\pgfmathresult:2cm) {$s_\x$};
  }
  \node[state, below left of =0] (7){$s_7$};
\draw (0) edge[above left] node{$a$} (1)
(0) edge[bend left=10,below] node{$\natural$} (7)
(1) edge[above] node{$a$} (2)
(1) edge[bend left=20,left] node{$\natural$} (6)
(2) edge[above right] node{$a$} (3)
(2) edge[bend right=40,left] node{$\natural$} (3)
(3) edge[right] node{$a$} (4)
(3) edge[bend left=80,left] node{$\natural$} (2)
(4) edge[below right] node{$a$} (5)
(4) edge[bend right=40,above] node{$\natural$} (5)
(5) edge[below] node{$a$} (6)
(5) edge[bend left=80,above] node{$\natural$} (4)
(6) edge[left] node{$a$} (0)
(6) edge[bend right=40,right] node{$\natural$} (1)
(7) edge[loop left,left] node{$a$} (7)
(7) edge[bend left=10,above ] node{$\natural$} (0)
;
\node (0) at (-2.8,2) {\Large$\aut^7_\MOD$};
\end{tikzpicture}
}\\[2pt]
In general, the first sequence
is $\aut^p_< = \bigl(\{s_{i} \mid i < p \},\{a\},\perm{\aut}{<}{p},s_0,\{s_0\} \bigr)$, where
$\perm{\aut}{<}{p}_a(s_i)=s_j$ if $i,j<p$ and $j\equiv i+1\ (\text{mod}\ p)$.
%
%
Then $\L(\aut^p_<)$ comprises all words of the form $(a^{p})^\ast$,
$\aut^p_<$ is the minimal DFA for $\L(\aut^p_<)$, and the syntactic monoid $M(\aut^p_<)$
 is the cyclic group of order $p$ (generated by the permutation $\smash{\perm{\aut}{<}{p}_a}$).

The second sequence is
%
$\aut^p_\equiv = \bigl(\{s_{i} \mid i < p \},\{a,\natural\},\perm{\aut}{\equiv}{p},s_0,\{s_0\} \bigr)$, where
$\perm{\aut}{\equiv}{p}_\natural(s_i)=s_i$ and
$\perm{\aut}{\equiv}{p}_a(s_i)=s_j$ if $i,j<p$ and $j\equiv i+1\ (\text{mod}\ p)$.
One can check that $\L(\aut^p_\equiv)$ comprises all words of $a$'s and $\natural$'s
where the number of $a$'s is divisible by $p$,
$\aut^p_\equiv$ is the minimal DFA for this language, and $M(\aut^p_\equiv)$
 is also the cyclic group of order $p$ (generated by the permutation $\perm{\aut}{\equiv}{p}_a$).

%
%
%
%
%
%
%

The third sequence is
%
%
$\aut^p_\MOD = \bigl(\{s_{i} \mid i \leq p \},\{a,\natural\},\perm{\aut}{\MOD}{p},s_0,\{s_0\} \bigr)$, where
\begin{itemize}
\item
$\perm{\aut}{\MOD}{p}_a(s_p)=s_p$, and
$\perm{\aut}{\MOD}{p}_a(s_i)=s_j$ whenever $i,j<p$ and $j\equiv i+1\ (\text{mod}\ p)$;

\item
$\perm{\aut}{\MOD}{p}_\natural(s_0)=s_p$, $\perm{\aut}{\MOD}{p}_\natural(s_p)=s_0$, and
$\perm{\aut}{\MOD}{p}_\natural(s_i)=s_j$ whenever $1\leq i,j<p$ and $i\cdot j\equiv p-1\ (\text{mod}\ p)$,
that is, $j = -1/i$ in the finite field $\mathbb F_p$.
\end{itemize}
One can check that $\aut^p_\MOD$ is the minimal DFA for its language,
and the syntactic monoid $M(\aut^p_\MOD)$ is the permutation group generated by
$\perm{\aut}{\MOD}{p}_a$ and $\perm{\aut}{\MOD}{p}_\natural$.

\begin{lemma}\label{algebralemma}
For any prime $p>5$ with $p\not\equiv\pm 1\ (\text{mod}\ 10)$, the group $M(\aut^p_\MOD)$ is unsolvable, but all of its proper subgroups are solvable.

%
\end{lemma}
\begin{proof}
One can check that the order of the permutation $\perm{\aut}{\MOD}{p}_\natural$ is $2$, that of $\perm{\aut}{\MOD}{p}_a$ is $p$, while the order of the inverse of $\perm{\aut}{\MOD}{p}_{\natural a}$ is the same as the
order of $\perm{\aut}{\MOD}{p}_{\natural a}$, which is $3$. So $M(\aut^p_\MOD)$ is unsolvable, for any prime $p$,
by the Kaplan--Levy criterion.
To prove that all proper subgroups of $M(\aut^p_\MOD)$ are solvable,
we show that $M(\aut^p_\MOD)$ is a subgroup of the
\emph{projective special linear group} $\text{\sc PSL}_2(p)$. If $p$ is a prime with $p>5$ and $p\not\equiv\pm 1\ (\text{mod}\ 10)$, then
all proper subgroups of $\text{\sc PSL}_2(p)$ are solvable; see, e.g.,~\cite[Theorem~2.1]{DBLP:conf/bcc/King05}.
(So $M(\aut^p_\MOD)$  is in fact isomorphic to the unsolvable group $\text{\sc PSL}_2(p)$.)
Consider the set $P=\{0,1,\dots,p-1,\infty\}$ of all points of the projective line over the field $\mathbb F_p$.
By identifying $s_i$ with $i$ for $i<p$, and $s_p$ with $\infty$,
we may regard the elements of $M(\aut^p_\MOD)$ as $P\to P$ functions.
The group $\text{\sc PSL}_2(p)$ consists of all $P\to P$ functions of the form
$i\mapsto\frac{w\cdot i + x}{y\cdot i+z}$, 
where $w\cdot z-x\cdot y=1$,  with the field arithmetic of $\mathbb F_p$ extended by $i+\infty=\infty$ for any $i\in P$, $0\cdot\infty=1$ and $i\cdot\infty=\infty$ for $i\ne 0$.
%
%
One can check that the two generators of $M(\aut^p_\MOD)$ are in $\text{\sc PSL}_2(p)$: take
$w=1$, $x=1$, $y=0$, $z=1$ for $\perm{\aut}{\MOD}{p}_a$,
and $w=0$, $x=1$, $y=p-1$, $z=0$ for $\perm{\aut}{\MOD}{p}_\natural$.
\end{proof}




Finally, we 
define three automata $\A_<$, $\A_\equiv$, $\A_\MOD$ over the same tape alphabet
$\Aabc = \Aiabc\cup\{a_1,a_2,\natural\}$, where $a_1,a_2$ are fresh symbols.
We take, respectively,  $\aut^{\pp}_<$, $\aut^{\pp}_\equiv$, $\aut^{\pp}_\MOD$  and replace each transition $s_i\to_a s_j$ in them by a fresh copy of $\A_i$, for $i \le \pp$, as shown in the picture below.\\
%
\centerline{
\begin{tikzpicture}[->,thick,scale=0.75,node distance=2cm, transform shape]
\node[state] (si) {$s_i$};
\node[state, right  of =si] (sj) {$s_{j}$};
\node[right  of= sj] (maps) {\Large\bf$\leadsto$};
\node[state, right  of= maps] (si2) {$s_i$};
\node[state, right of=si2] (q) {$t_i$};
\node[state,right of= q](f){$f_i$};
\node[state, right of =f] (sj2) {$s_j$};
\node[fit = (q)(f), basic box = black, header = $\A_i$] (A) {};
\draw (si) edge [above] node{$a$} (sj)
(si2) edge [above] node{$a_1$} (q)
(f) edge [above] node{$a_2$} (sj2)
(q) edge [dotted,above] node{$ $} (f)
;
\end{tikzpicture}
}\\
We make $\A_<$, $\A_\equiv$, $\A_\MOD$ deterministic by adding a trash state $\tst$
looping on itself with every $y\in\Aabc$, and adding the missing transitions leading to $\tst$.
It follows that $\A_<$, $\A_\equiv$, and $\A_\MOD$ are minimal DFAs of size polynomial in $\ppn$, $|\M|$.

\begin{lemma}\label{l:empty}
$(i)$ $\L(\A_<)$ is $\FO(<)$-definable iff $\bigcap_{i=0}^{\pp} \L(\A_i)= \emptyset$.

$(ii)$ $\L(\A_\equiv)$ is $\FO(<,\equiv)$-definable iff $\bigcap_{i=0}^{\pp} \L(\A_i)= \emptyset$.

$(iii)$ $\L(\A_\MOD)$ is $\FO(<,\MOD)$-definable iff  $\bigcap_{i=0}^{\pp} \L(\A_i)= \emptyset$.
\end{lemma}
\begin{proof}
As $\A_<,\A_\equiv,\A_\MOD$ are minimal, we can replace $\simm$ by $=$ in the conditions
of Theorem~\ref{DFAcrit}.
For the ($\Rightarrow$) directions, given some  $w\in\bigcap_{i=0}^{\pp} \L(\A_i) $, in each case we show how to satisfy the corresponding condition of Theorem~\ref{DFAcrit}:
$(i)$ take $u=a_1wa_2$, $q=s_0$, and $k=\pp$;
$(ii)$ take $u=a_1wa_2$, $v=\natural^{|u|}$, $q=s_0$, and $k=\pp$;
$(iii)$ take $u=\natural$,  $v=a_1wa_2$, $q = s_0$, $k = \pp$ and $l = 3$.

$(\Leftarrow)$ We show that the corresponding condition of Theorem~\ref{DFAcrit}
implies non-emptiness of $\bigcap_{i=0}^{\pp} \L(\A_i) $.
To this end, we define a
$\Aabc^\ast\to\{a,\natural\}^\ast$ homomorphism by taking $h(\natural)=\natural$, $h(a_1)=a$, and $h(b)=\varepsilon$ for all other $b\in\Aabc$.

\smallskip
$(i)$ and $(ii)$: Let $\circ\in\{<,\equiv\}$ and suppose $q$ is a state in $\A^\pp_\circ$ and $u'\in\Aabc^\ast$ such that $q\ne\perm{\A}{\circ}{\pp}_{u'}(q)$ and
$q=\perm{\A}{\circ}{\pp}_{(u')^k}(q)$ for some $k$.
Let $S=\{s_0,s_1,\dots,s_{\pp-1}\}$.
 We claim that there exist $s\in S$ and $u\in \Aabc^\ast$ such that
 \begin{align}
 \label{unotid}
 & s\ne\perm{\A}{\circ}{\pp}_{u}(s),\\
 \label{uins}
 & \perm{\A}{\circ}{\pp}_{x}(s)\in S,\quad\mbox{for every $x\in\{u\}^\ast$.}
 \end{align}
 Indeed, observe that none of the states along the cyclic $q\to_{(u')^k} q$ path $\Pi$ in $\A^\pp_\circ$ is $\tst$.
 So there is some state along $\Pi$ that is in $S$, as otherwise one of the $\A_i$ would contain a nontrivial cycle.
 Therefore, $u'$ must be of the form $w\natural^n a_1w'$ for some $w\in\Aiabc^\ast$, $n<\omega$ and $w'\in\Aabc^\ast$.
 It is easy to see that $s=\perm{\A}{\circ}{\pp}_{(u')^{k-1}w}(q)$ and $u=\natural^na_1w'w$ is as required in \eqref{unotid} and \eqref{uins}.

 As $M(\aut^\pp_\circ)$ is a finite group, the set $\bigl\{\perm{\aut}{\circ}{\pp}_{h(x)}\mid x\in\{u\}^\ast\bigr\}$
 forms a subgroup $\G$ in it (the subgroup generated by $\perm{\aut}{\circ}{\pp}_{h(u)}$).
 We show that $\G$ is nontrivial by finding a nontrivial homomorphic image of it.
 To this end,
  \eqref{uins} implies that, for every $x\in\{u\}^\ast$, the restriction $\perm{\A}{\circ}{\pp}_{x}\!\!\restriction_{S'}$ of
 $\perm{\A}{\circ}{\pp}_{x}$ to the set
 $S'=\bigl\{ \perm{\A}{\circ}{\pp}_{y}(s)\mid y\in\{u\}^\ast\bigr\}$ is an $S'\to S'$ function and
 $\perm{\A}{\circ}{\pp}_{x}\!\!\restriction_{S'}=\perm{\aut}{\circ}{\pp}_{h(x)}\!\!\restriction_{S'}$.
As  $M(\aut^\pp_\circ)$ is a group of permutations on a set containing $S'$,
$\perm{\aut}{\circ}{\pp}_{h(x)}\!\!\restriction_{S'}$ is a permutation of $S'$, for every $x\in\{u\}^\ast$.
Thus, $\bigl\{\perm{\aut}{\circ}{\pp}_{h(x)}\!\!\restriction_{S'}\mid x\in\{u\}^\ast\bigr\}$ is a homomorphic image of $\G$
that is nontrivial by \eqref{unotid}.

As $\G$ is a nontrivial subgroup of the cyclic group $M(\aut^\pp_\circ)$ of order $\pp$ and $\pp$ is a prime, $\G=M(\aut^\pp_\circ)$. Then there is $x\in\{u\}^\ast$ with $\perm{\aut}{\circ}{\pp}_{h(x)}=\perm{\aut}{\circ}{\pp}_{a}$ (a permutation containing the $\pp$-cycle $(s_0 s_1\dots s_{\pp-1})$ `around' all elements of $S$), and so $S'=S$
and $x=\natural^n a_1wa_2w'$ for some $n<\omega$, $w\in\Aiabc^\ast$, and $w'\in\Aabc^\ast$.
As $n=0$ when $\circ=<$ and $\perm{\A}{\equiv}{\pp}_{\natural^n}(s)$ for every $s\in S$,
$S'=S$ implies that $w\in\bigcap_{i=0}^{\pp-1} \L(\A_i)=\bigcap_{i=0}^{\pp} \L(\A_i)$.

$(iii)$ Suppose $q$ is a state in $\A^\pp_\MOD$ and $u',v'\in\Aabc^\ast$ such that
$q\ne\perm{\A}{\MOD}{\pp}_{u'}(q)$, $q\ne\perm{\A}{\MOD}{\pp}_{v'}(q)$, $q\ne\perm{\A}{\MOD}{\pp}_{u'v'}(q)$,
and $\perm{\A}{\MOD}{\pp}_{x}(q)=\perm{\A}{\MOD}{\pp}_{x(u')^2}(q)=\perm{\A}{\MOD}{\pp}_{x(v')^k}(q)=\perm{\A}{\MOD}{\pp}_{x(u'v')^l}(q)$ for some odd prime $k$ and number $l$ that is coprime to both $2$ and $k$.
Take $S=\{s_0,s_1,\dots,s_{\pp}\}$.
 We claim that there exist $s\in S$ and $u,v\in \Aabc^\ast$ such that
 \begin{align}
 \label{allnotid}
 & s\ne\perm{\A}{\MOD}{\pp}_{u}(s),\ s\ne\perm{\A}{\MOD}{\pp}_{v}(s),\ s\ne\perm{\A}{\MOD}{\pp}_{uv}(s),\\
  \label{allins}
 & \perm{\A}{\MOD}{\pp}_{x}(s)\in S,\quad\mbox{for every $x\in\{u,v\}^\ast$,}\\
  \label{allorder}
 & \perm{\A}{\MOD}{\pp}_{x}(s)=\perm{\A}{\MOD}{\pp}_{xu^2}(s)=\perm{\A}{\MOD}{\pp}_{xv^k}(s)=\perm{\A}{\MOD}{\pp}_{x(uv)^l}(s),\quad\mbox{for every $x\in\{u,v\}^\ast$.}
 \end{align}
  Indeed, by an argument similar to the one in the proof of $(i)$ and $(ii)$ above, we must have $u'=w_u\natural^n a_1w'_u$ and $v'=w_v\natural^m a_1w'_v$ for some $w_u,w_v\in\Aiabc^\ast$, $n,m<\omega$ and $w'_u,w'_v\in\Aabc^\ast$.
 For every $x\in\{u,v\}^\ast$, as both
  $\perm{\A}{\MOD}{\pp}_{xw_u}(q)$ and $\perm{\A}{\MOD}{\pp}_{xw_v}(q)$
 are in $S$, they must be the same state. Using this
it is not hard to see that $s=\perm{\A}{\MOD}{\pp}_{u'w_u}(q)$, $u=\natural^na_1w'_uw_u$ and $v=\natural^ma_1w'_vw_v$ are as required in \eqref{allnotid}--\eqref{allorder}.

 As $M(\aut^\pp_\MOD)$ is a finite group, the set $\bigl\{\perm{\aut}{\MOD}{\pp}_{h(x)}\mid x\in\{u,v\}^\ast\bigr\}$
 forms a subgroup $\G$ in it (the subgroup generated by $\perm{\aut}{\MOD}{\pp}_{h(u)}$ and $\perm{\aut}{\MOD}{\pp}_{h(v)}$). We show that $\G$ is unsolvable by finding an unsolvable homomorphic
 image of it. To this end, we let $S'=\bigl\{ \perm{\A}{\MOD}{\pp}_{y}(s)\mid y\in\{u,v\}^\ast\bigr\}$.
 Then \eqref{allins} implies that $S'\subseteq S$ and
 \begin{equation}\label{sclosed}
 \perm{\aut}{\MOD}{\pp}_{h(x)}(s')=\perm{\A}{\MOD}{\pp}_{x}(s')\in S',\quad
 \mbox{for all $s'\in S$ and $x\in\{u,v\}^\ast$,}
 \end{equation}
 and so the restriction $\perm{\A}{\MOD}{\pp}_{x}\!\!\restriction_{S'}$ of
 $\perm{\A}{\MOD}{\pp}_{x}$ to $S'$ is an $S'\to S'$ function and
 $\perm{\A}{\MOD}{\pp}_{x}\!\!\restriction_{S'}=\perm{\aut}{\MOD}{\pp}_{h(x)}\!\!\restriction_{S'}$.
 As  $M(\aut^\pp_\MOD)$ is a group of permutations on a set containing $S'$,
$\perm{\aut}{\MOD}{\pp}_{h(x)}\!\!\restriction_{S'}$ is a permutation of $S'$, for any  $x\in\{u,v\}^\ast$.
So \mbox{$\{\perm{\aut}{\MOD}{\pp}_{h(x)}\!\!\restriction_{S'}\mid x\in\{u,v\}^\ast\!\}$} is a homomorphic image of $\G$ that
is unsolvable by the Kaplan--Levy criterion: By \eqref{allnotid}, \eqref{allorder}, and $2$ and $k$ being primes,
the order of the permutation $\perm{\aut}{\MOD}{\pp}_{h(u)}\!\!\restriction_{S'}$ is $2$,
the order of $\perm{\aut}{\MOD}{\pp}_{h(v)}\!\!\restriction_{S'}$ is $k$,
and the order of $\perm{\aut}{\MOD}{\pp}_{h(uv)}\!\!\restriction_{S'}$ (which is the same as the order of its inverse)
is a $>1$ divisor of $l$, and so coprime to both $2$ and $k$.

As $\G$ is an unsolvable subgroup of $M(\aut^\pp_\MOD)$, it follows from Lemma~\ref{algebralemma} that
$\G=M(\aut^\pp_\MOD)$, and so $\{u,v\}^\ast\not\subseteq\natural^\ast$. We claim that $S'=S$ also follows.
Indeed, let $x\in\{u,v\}^\ast$ be such that $\perm{\aut}{\MOD}{\pp}_{h(x)}=\perm{\aut}{\MOD}{\pp}_{a}$.
As $|S'|\geq 2$ by \eqref{allnotid}, $s\in \{s_0,\dots,s_{\pp-1}\}$ must hold, and so
$\{s_0,\dots,s_{\pp-1}\}\subseteq S'$ follows by \eqref{sclosed}.  As there is
$y\in\{u,v\}^\ast$ with $\perm{\aut}{\MOD}{\pp}_{h(y)}=\perm{\aut}{\MOD}{\pp}_{\natural}$, $s_\pp\in S'$ also follows by \eqref{sclosed}.
Finally, as $\{u,v\}^\ast\not\subseteq\natural^\ast$, there is $x\in\{u,v\}^\ast$ of the form
$\natural^na_1wa_2w'$, for some $n<\omega$, $w\in\Aiabc$ and $w'\in\Aabc^\ast$.
As $S'=S$, $\perm{\aut}{\MOD}{\pp}_{x}(s_i)\in S$ for every $i\leq p$, and so
$w\in\bigcap_{i=0}^{\pp} \L(\A_i)$.
\end{proof}


Theorem \ref{DFAhard} clearly follows from Lemmas~\ref{l:DFAs} and \ref{l:empty}.
\end{proof}


\section{Deciding $\lang$-definability of 2NFAs in \PSpace}\label{sec:2nfa}

Using the criterion Theorem~\ref{DFAcrit}~$(i)$, Stern~\cite{DBLP:journals/iandc/Stern85} showed that
deciding whether the language of any given DFA is $\FO(<)$-definable can be done in \PSpace.
In this section, we also use the criteria of Theorem~\ref{DFAcrit} to provide
\PSpace-algorithms deciding whether the language of any given 2NFA is $\lang$-definable, 
whenever $\lang \in \{ \FO(<), \FO(<,\equiv), \FO(<,\MOD)\}$.
Let $\A = (Q, \Sigma, \delta, Q_0, F)$ be a 2NFA. Following~\cite{carton_et_al:LIPIcs:2015:5413}, 
we first construct a(n exponential size) DFA $\A'$ such that $\L(\A)=\L(\A')$. To this end, for any $w \in \Sigma^+$, we introduce four binary relations $\mathsf{b}_{lr}(w)$, $\mathsf{b}_{rl}(w)$, $\mathsf{b}_{rr}(w)$, and $\mathsf{b}_{ll}(w)$ on $Q$ describing the \emph{left-to-right}, \emph{right-to-left}, \emph{right-to-right}, and \emph{left-to-left behaviour of} $\A$ \emph{on} $w$. Namely,
\begin{itemize}
\item $(q,q') \in \mathsf{b}_{lr}(w)$ if there is a run of $\A$ on $w$ from $(q, 0)$ to $(q', |w|)$;

\item $(q,q') \in \mathsf{b}_{rr}(w)$ if there is a run of $\A$ on $w$ from $(q, |w|-1)$ to $(q', |w|)$;

\item $(q,q') \in \mathsf{b}_{rl}(w)$ if, for some $a \in \Sigma$, there is a run on $aw$ from $(q, |aw|-1)$ to $(q', 0)$ such that no $(q'',0)$ occurs in it before $(q', 0)$;

\item $(q,q') \in \mathsf{b}_{ll}(w)$ if, for some $a \in \Sigma$, there is a run on $aw$ from $(q, 1)$ to $(q', 0)$ such that no $(q'',0)$ occurs in it before $(q', 0)$.
\end{itemize}
For $w = \varepsilon$ (the empty word), we define the $\mathsf{b}_{ij}(w)$ as the identity relation on $Q$.
Let $\mathsf{b} = (\mathsf{b}_{lr}, \mathsf{b}_{rl}, \mathsf{b}_{rr}, \mathsf{b}_{ll})$, where the $\mathsf{b}_{ij}$ are the behaviours of $\A$ on some $w \in \Sigma^*$, in which case we can also write $\mathsf{b}(w)$, and let $\mathsf{b}' = \mathsf{b}(w')$, for some $w' \in \Sigma^*$. We define the composition $\mathsf{b} \cdot \mathsf{b}' = \mathsf{b}''$ with components $\mathsf{b}_{ij}''$ as follows. Let $X$ and $Y$ be the transitive closure of $\mathsf{b}_{ll}' \circ \mathsf{b}_{rr}$ and $\mathsf{b}_{rr} \circ \mathsf{b}_{ll}'$, respectively.
Then we set:
\begin{align*}
&  \mathsf{b}_{lr}'' =  \mathsf{b}_{lr} \circ \mathsf{b}_{lr}' \cup  \mathsf{b}_{lr} \circ X \circ \mathsf{b}_{lr}',\qquad
\mathsf{b}_{rl}'' =  \mathsf{b}_{rl}' \circ \mathsf{b}_{rl} \cup \mathsf{b}_{rl}' \circ Y \circ \mathsf{b}_{rl}, \\
&  \mathsf{b}_{rr}'' =  \mathsf{b}_{rr}' \cup \mathsf{b}_{rl}' \circ Y \circ \mathsf{b}_{rr} \circ \mathsf{b}_{lr}',\qquad
 \mathsf{b}_{ll}'' = \mathsf{b}_{ll} \cup \mathsf{b}_{lr} \circ X \circ \mathsf{b}_{ll}' \circ \mathsf{b}_{rl}.
\end{align*}
One can check that $\mathsf{b}'' = \mathsf{b}(ww')$.
Define a DFA $\A' = (Q', \Sigma, \delta', q_0', F')$ by taking
\begin{align*}
& Q' = \bigl\{ (B_{lr}, B_{rr}) \mid B_{lr} \subseteq Q_0 \times Q, \ B_{rr} \subseteq Q \times Q \bigr\},\ \
q_0' = \bigl(\bigl\{(q,q) \mid q \in Q_0\bigr\}, \emptyset\bigr),\\
& F' = \bigl\{(B_{lr}, B_{rr}) \mid (q_0, q) \in B_{lr}, \text{ for some $q_0 \in Q_0$ and $q\in F$} \bigr\},\\
& \delta'_a\bigl((B_{lr}, B_{rr})\bigr) =(B_{lr}', B_{rr}'), \text{ with}\ B_{lr}' = B_{lr} \circ X(a) \circ \mathsf{b}_{lr}(a),\\
& \hspace*{5.2cm} B_{rr}' = B_{rr} \cup \mathsf{b}_{rl}(a) \circ Y(a) \circ \mathsf{b}_{lr}(a),
\end{align*}
where $X(a)$ and $Y(a)$ are the reflexive and transitive closures of $\mathsf{b}_{ll}(a) \circ B_{rr}$ and $B_{rr} \circ \mathsf{b}_{ll}(a)$, respectively.
It is not hard to see that, for any $w\in\Sigma^\ast$,
\begin{align}
\nonumber
& \delta'_w\bigl((B_{lr}, B_{rr})\bigr) =(B_{lr}', B_{rr}')\ \text{iff}\ B_{lr}' = B_{lr} \circ X(w) \circ \mathsf{b}_{lr}(w) \text{ and}\\
\label{aprime-reach}
& \hspace*{4.7cm} B_{rr}' = B_{rr} \cup \mathsf{b}_{rl}(w) \circ Y(w) \circ \mathsf{b}_{lr}(w),
\end{align}
where $X(w)$ and $Y(w)$ are the reflexive and transitive closures of $\mathsf{b}_{ll}(w) \circ B_{rr}$ and $B_{rr} \circ \mathsf{b}_{ll}(w)$, respectively.
Also, one can show in a way similar to~\cite{5392614,10.1016/0020-0190(89)90205-6} that
\begin{equation}\label{eq:2nfatodfa}
\L(\A) = \L(\A').
\end{equation}

Next, we show that, even if the size of $\A'$ is exponential in $\A$, we can still use Theorem~\ref{DFAcrit} to decide $\lang$-definability of $\L(\A)$ in \PSpace:

\begin{theorem}\label{thm:2NFA}
For $\lang \in \{ \FO(<), \FO(<,\equiv), \FO(<,\MOD)\}$,
deciding $\lang$-definability of $\L(\A)$, for any 2NFA $\A$, is in \PSpace.
\end{theorem}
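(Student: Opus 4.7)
The plan is to apply Theorem~\ref{DFAcrit} to the DFA $\A'$, using $\L(\A)=\L(\A')$ from~\eqref{eq:2nfatodfa}. Although $|Q'|=2^{\text{poly}(|Q|)}$ can be exponential, each state $(B_{lr},B_{rr})\in Q'$ fits in polynomial space, a one-step transition $\delta'_a$ is computable in polynomial time from the behaviour $\mathsf{b}(a)$, and the behaviour $\mathsf{b}(w)$ of an arbitrary word $w$ can be accumulated from left to right by composing the behaviours of single symbols. From these primitives one can check in \PSpace\ (by nondeterministic simulation together with Savitch's theorem) whether a given polynomial-size state of $\A'$ is reachable and whether two such states are $\simm$-equivalent; moreover, given a polynomially-described behaviour $\mathsf{b}=\mathsf{b}(u)$ and any $r\in Q'$, the state $\delta'_u(r)$ is computable in polynomial time.

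The \PSpace\ algorithm for criterion $(i)$ of Theorem~\ref{DFAcrit} on $\A'$ is then the following nondeterministic procedure: guess a reachable $q\in Q'$; guess $u$ symbol by symbol, accumulating its behaviour $\mathsf{b}(u)$ in polynomial space, and nondeterministically decide when $u$ ends; compute $q_1=\delta'_u(q)$ from $\mathsf{b}(u)$ and $q$, and verify $q\not\simm q_1$; finally, iterate $q_{i+1}=\delta'_u(q_i)$, accepting as soon as $q_i=q$ for some $i\geq 1$ and rejecting once a polynomial-bit counter $i$ exceeds $|Q'|$. Only $q$, $\mathsf{b}(u)$, the current $q_i$, and the counter $i$ need be kept in memory. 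Criterion $(ii)$ is handled the same way, additionally guessing a behaviour $\mathsf{b}(v)$ of a word $v$ of the same length as $u$ (by generating $u$ and $v$ in lockstep) and checking $\delta'_v(q_i)=q_i$ at every step of the iteration. For criterion $(iii)$ we further guess an odd prime $k$ and a number $l>1$ coprime to both $2$ and $k$, with $k,l\leq|Q'|$ (so of polynomial bit-length), together with behaviours $\mathsf{b}(u),\mathsf{b}(v)$, and verify $q\not\simm\delta'_u(q)$, $q\not\simm\delta'_v(q)$ and $q\not\simm\delta'_{uv}(q)$.

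The one genuinely new difficulty compared with the DFA case lies in the universally quantified clause of criterion $(iii)$: $\delta'_x(q)\simm\delta'_{xu^2}(q)\simm\delta'_{xv^k}(q)\simm\delta'_{x(uv)^l}(q)$ must hold for \emph{every} $x\in\{u,v\}^{\ast}$. The plan here is to check the negation in \PSpace: nondeterministically guess $x$ as a word over $\{u,v\}$ one symbol at a time, storing only the current state $p=\delta'_x(q)$; at a nondeterministically chosen moment, compute $\delta'_{u^2}(p)$, $\delta'_{v^k}(p)$ and $\delta'_{(uv)^l}(p)$ by at most $|Q'|$ successive applications of $\delta'_u$, $\delta'_v$ and $\delta'_{uv}$ (using polynomial-bit counters for $k$ and $l$), and verify $p\not\simm\delta'_y(p)$ for one of these three $y$. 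Since \PSpace\ is closed under complementation, this yields a \PSpace\ test of the universal clause, and hence a \PSpace\ algorithm for deciding $\lang$-definability in each of the three cases. The main obstacle, in short, is that $\A'$ has exponentially many states and the cycle length $k$ from Theorem~\ref{DFAcrit} can be as large as $|Q'|$; both are tamed by the succinct representation of states and behaviours together with polynomial-bit counters for the $|Q'|$-bounded iterations.
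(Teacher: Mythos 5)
Your proposal is correct and follows essentially the same route as the paper: pass to the DFA $\A'$ via the behaviour quadruples $\mathsf{b}(w)$, keep states and word behaviours in polynomial space, compose behaviours symbol by symbol, and verify the criteria of Theorem~\ref{DFAcrit} by nondeterministic guessing with polynomial-bit counters, relying on \textsc{NPSpace} $=$ \PSpace{} for the $\simm$- and reachability checks. The only (inessential) divergence is in the universal clause of criterion $(iii)$: you guess a counterexample $x\in\{u,v\}^\ast$ and invoke closure of \PSpace{} under complementation, whereas the paper deterministically enumerates all candidate quadruples $\mathsf{b}$ and tests each realised one---both rest on the same observation that the condition at $x$ depends only on $\mathsf{b}(x)$.
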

\begin{proof}
Let $\A'$ be the DFA defined above for the given 2NFA $\A$.
By Theorem~\ref{DFAcrit} $(i)$ and~\eqref{eq:2nfatodfa}, $\L(\A)$ is not $\FO(<)$-definable iff there exist a word $u\in\Sigma^\ast$, a reachable state $q \in Q'$, and a number $k\leq|Q'|$ such that $q\not\simm \delta'_u(q)$ and $q= \delta'_{u^k}(q)$. We guess the required $k$ in binary, $q$ and a quadruple $\mathsf{b}(u)$ of binary relations  on $Q$. Clearly, they all can be stored in polynomial space in $|\A|$. To check that our guesses are correct, we  first check that $\mathsf{b}(u)$ indeed corresponds to some $u \in \Sigma^\ast$. This is done by guessing a sequence $\mathsf{b}_0, \dots, \mathsf{b}_n$ of distinct quadruples of binary relations on $Q$ such that $\mathsf{b}_0 = \mathsf{b}(u_0)$ and $\mathsf{b}_{i+1} = \mathsf{b}_{i} \cdot \mathsf{b}(u_{i+1})$, for some $u_0, \dots, u_n \in \Sigma$. (Any sequence with a subsequence starting after $\mathsf{b}_i$ and ending with $\mathsf{b}_{i+m}$, for some $i$ and $m$ such that $\mathsf{b}_i = \mathsf{b}_{i+m}$, is equivalent, in the context of this proof, to the sequence with such a subsequence removed.) Thus, we can assume that $n \leq 2^{O(|Q|)}$, and so $n$ can be guessed in binary and stored in \PSpace{}. So, the stage of our algorithm checking that $\mathsf{b}(u)$ corresponds to some $u \in \Sigma^*$ makes $n$ iterations and continues to the next stage if $\mathsf{b}_n = \mathsf{b}(u)$ or terminates with an answer \no{} otherwise. Now, using $\mathsf{b}(u)$, we compute $\mathsf{b}(u^k)$ by means of a sequence $\mathsf{b}_0, \dots, \mathsf{b}_k$, where $\mathsf{b}_0 = \mathsf{b}(u)$ and $\mathsf{b}_{i+1} = \mathsf{b}_{i} \cdot \mathsf{b}(u)$. With $\mathsf{b}(u)$ ($\mathsf{b}(u^k)$), we compute $\delta'_{u}(q)$ (respectively, $\delta'_{u^k}(q)$) in \PSpace{} using \eqref{aprime-reach}. If $\delta'_{u^k}(q) \neq q$, the algorithm terminates with an answer \no{}. Otherwise, in the final stage of the algorithm, we check that $\delta'_{u}(q) \not \sim q$. This is done by guessing $v \in \Sigma^*$ such that $\delta'_v(q) = q_1$, $\delta'_v\bigl(\delta'_{u}(q)\bigr) = q_2$, and $q_1 \in F'$ iff $q_1 \not \in F'$. We guess such a $v$ (if exists) in the form of $\mathsf{b}(v)$ using an algorithm analogous to that for guessing $u$ above.

By Theorem~\ref{DFAcrit}~$(ii)$ and~\eqref{eq:2nfatodfa}, $\L(\A)$ is not $\FO(<,\equiv)$-definable iff there
there exist words $u,v\in\Sigma^\ast$, a reachable state $q \in Q'$, and a number $k\leq|Q'|$ such that
$q\not\simm\delta'_u(q)$, $q=\delta'_{u^k}(q)$, $|v|=|u|$, and $\delta'_{u^i}(q)=\delta'_{u^iv}(q)$, for all $i<k$.
We outline how to modify the algorithm for $\FO(<)$ above to check $\FO(<,\equiv)$-definability. First, we need to guess and check $v$ in the form of $\mathsf{b}(v)$ in parallel with guessing and checking $u$ in the form of $\mathsf{b}(u)$, making sure that $|v| = |u|$. For that, we guess a sequence of distinct pairs $(\mathsf{b}_0, \mathsf{b}_0'), \dots, (\mathsf{b}_n, \mathsf{b}_n')$ such that the $\mathsf{b}_i$ are as above, $\mathsf{b}_0' = \mathsf{b}(v_0)$ and $\mathsf{b}_{i+1}' = \mathsf{b}_{i}' \cdot \mathsf{b}(v_{i+1})$, for some $v_0, \dots, v_n \in \Sigma$. (Any such sequence with a subsequence starting after $(\mathsf{b}_i, \mathsf{b}_i')$ and ending with $(\mathsf{b}_{i+m}, \mathsf{b}_{i+m}')$, for some $i$ and $m$ such that $(\mathsf{b}_i, \mathsf{b}_i') = (\mathsf{b}_{i+m}, \mathsf{b}_{i+m}')$, is equivalent to the sequence with that  subsequence removed.) So $n \leq 2^{O(|Q|)}$.
For each $i<k$,
we can then compute $\delta'_{u^i}(q)$ and $\delta'_{u^iv}(q)$, using \eqref{aprime-reach}, and check whether
whether they are equal.

Finally,
by Theorem~\ref{DFAcrit}~$(iii)$ and~\eqref{eq:2nfatodfa}, $\L(\A)$ is not $\FO(<,\MOD)$-definable iff
 there exist $u,v\in\Sigma^\ast$, a reachable
state $q\in Q'$ and $k,l\leq |Q'|$ such that $k$ is an odd prime, $l>1$ and coprime to both $2$ and $k$,
$q\not\simm\delta'_u(q)$, $q\not\simm\delta'_v(q)$, $q\not\simm\delta'_{uv}(q)$, and
$\delta'_{x}(q)\simm\delta'_{xu^{2}}(q)\simm\delta'_{xv^{k}}(q)\simm\delta'_{x(uv)^{l}}(q)$, for all $x\in\{u,v\}^\ast$.
We start by guessing $u,v \in \Sigma^*$ in the form of $\mathsf{b}(u)$ and $\mathsf{b}(u)$, respectively.
Also, we guess $k$ and $l$ in binary and check that $k$ is an odd prime and $l$ is coprime to both $2$ and $k$.
By \eqref{aprime-reach}, $\delta'_x$ is determined by $\mathsf{b}(x)$, for any $x\in\{u,v\}^\ast$.
Thus, we can proceed as follows to verify that $u$, $v$, $k$ and $l$ are as required. We perform the following steps, for \emph{each} quadruple $\mathsf{b}$ of binary relations on $Q$. First, we check whether $\mathsf{b} = \mathsf{b}(x)$, for some $x \in \{u,v\}^\ast$ (we discuss the algorithm for this below). If this is not the case, we construct the \emph{next} quadruple $\mathsf{b}'$ and process it as this $\mathsf{b}$. If it is the case,
we compute all the states $\delta'_{x}(q)$, $\delta'_{xu^{2}}(q)$, $\delta'_{xv^{k}}(q)$, $\delta'_{x(uv)^{l}}(q)$, $\delta'_{u}(q)$, $\delta'_{v}(q)$, $\delta'_{uv}(q)$,
and check their required (non)equivalences w.r.t.\ $\sim$,
using the same method as for checking $\delta'_{u}(q) \not\sim q$ above.
 If they do not hold as required, our algorithm terminates with an answer \no{}. Otherwise, we construct the \emph{next} quadruple $\mathsf{b}'$ and process it as this $\mathsf{b}$. When all possible quadruples $\mathsf{b}$ of binary relations of $Q$ have been processed, the algorithm terminates with an answer \yes{}.

Now, to check that a given quadruple $\mathsf{b}$ is equal to $\mathsf{b}(x)$, for some $x \in  \{u,v\}^\ast$, we simply guess a sequence $\mathsf{b}_0, \dots, \mathsf{b}_n$ of quadruples of binary relations on $Q$ such that $\mathsf{b}_0 = \mathsf{b}(w_0)$, $\mathsf{b}_n = \mathsf{b}$ and $\mathsf{b}_{i+1} = \mathsf{b}_{i} \cdot \mathsf{b}(w_{i+1})$, where $w_i \in \{u, v\}$. It follows from the argument above that it is enough to consider $n \leq 2^{O(|Q|)}$.
\end{proof}

\section{Further Research}

The results obtained in this paper have been used for deciding the rewritability type of ontology-mediated queries (OMQs) given in linear temporal logic \LTL~\cite{DBLP:conf/time/21}.
As mentioned in the introduction, \LTL{} OMQs can be simulated by automata. In the worst case, the automata are of exponential size, and deciding FO-rewritability of some OMQs may become \ExpSpace-complete. On the other hand, there are natural and practically important fragments of \LTL{} with automata of special forms whose FO-rewritability can be decided in \PSpace{}, $\Pi^p_2$ or \coNP{}. However, it remains to be seen whether the corresponding algorithms, even in the simplest case of $\FO(<)$-definability, are efficient enough for applications in temporal OBDA. Note that the problems considered in this paper are also relevant to the optimisation problem for recursive SQL queries. 

\medskip
\noindent
{\bf Acknowledgements.} This work was supported by UK EPSRC EP/S032282.



\end{document}